\DeclarePairedDelimiter{\ceil}{\lceil}{\rceil}
\def\BibTeX{{\rm B\kern-.05em{\sc i\kern-.025em b}\kern-.08em
    T\kern-.1667em\lower.7ex\hbox{E}\kern-.125emX}}
\newtheorem{theorem}{Theorem} 
\newtheorem{definition}{Definition}
\newcommand{\always}{\square}
\newcommand{\eventually}{\lozenge}
\newcommand{\until}{\mathcal{U}_I}
\newcommand{\ew}{\boxbox_{\ra}}
\newcommand{\sw}{\diamonddiamond_{\ra}}
\newcommand{\ag}{\mathcal{A}_{\ra}}
\newcommand{\ct}{\mathcal{C}_{\ra}}
\newcommand{\op}{\mathrm{op}}
\newcommand{\avg}{\mathrm{avg}}
\newcommand{\nb}{L_{\ra}}
\newcommand{\nbx}{\alpha_{\ra}^x(\omega, t, l)}
\newcommand{\sat}{\models}
\newcommand{\eqdef}{\Leftrightarrow}
\newcommand{\everywhere}{\boxbox}
\newcommand{\somewhere}{\diamonddiamond}
\newcommand{\agr}{\mathcal{A}}
\newcommand{\ra}{\mathcal{D}}
\newcommand{\traceset}{\omega}
\newcommand{\tablefontsize}{\scriptsize}
\definecolor{brilliantlavender}{rgb}{0.96, 0.73, 1.0}
\definecolor{blond}{rgb}{0.98, 0.94, 0.75}
\definecolor{celadon}{rgb}{0.67, 0.88, 0.69}
\definecolor{columbiablue}{rgb}{0.61, 0.87, 1.0}
\definecolor{lavenderblush}{rgb}{1.0, 0.94, 0.96}
\definecolor{electriclavender}{rgb}{0.96, 0.73, 1.0}
\newcommand{\ent}[1]{\colorbox{lavenderblush}{#1}}
\newcommand{\temporal}[1]{\colorbox{celadon}{#1}}
\newcommand{\aggregation}[1]{\colorbox{blond}{#1}}
\newcommand{\condition}[1]{\colorbox{pink}{#1}}
\newcommand{\comparison}[1]{\colorbox{columbiablue}{#1}}
\newcommand{\spatial}[1]{\colorbox{electriclavender}{#1}}
\newcommand{\smallfunction}{\delta}
\newcolumntype{L}[1]{>{\raggedright\let\newline\\\arraybackslash\hspace{0pt}}m{#1}}
\newcolumntype{C}[1]{>{\centering\let\newline\\\arraybackslash\hspace{0pt}}m{#1}}
\newcolumntype{R}[1]{>{\raggedleft\let\newline\\\arraybackslash\hspace{0pt}}m{#1}}
\newcommand{\sectref}[1]{Section~\ref{#1}}
\newcommand{\figref}[1]{Figure~\ref{#1}}
\newcommand{\tabref}[1]{Table~\ref{#1}}
\newcommand{\expref}[1]{Example~\ref{#1}}
\newcommand{\algref}[1]{Algorithm~\ref{#1}}
\newtheorem{example}{Example}
\newtheorem{lemma}{Lemma}
\newcommand{\Tset}{\mathbb{T}}
\newcommand\newcontent[1]{\color{black}{#1}\color{black}}
\newcommand\revision[1]{\color{black}{#1}\color{black}}
\begin{document}
%
\title{A Novel Spatial-Temporal Specification-Based Monitoring System for Smart Cities}
%
%
%

\author{Meiyi~Ma,
        Ezio~Bartocci,
        Eli~Lifland,
        John~Stankovic,~\IEEEmembership{Life~Fellow,~IEEE},
        and~Lu~Feng,~\IEEEmembership{Member,~IEEE}
\thanks{Corresponding Author: Meiyi Ma, E-mail: meiyi@virginia.edu.}
\thanks{Meiyi Ma, Eli~Lifland, John Stankovic and Lu Feng are with the Department
of Computer Science, University of Virginia, 
VA, 22904. }
\thanks{Ezio Bartocci is with the Faculty of Informatics, TU Wien, Austria.}
}

\maketitle

\begin{abstract}
With the development of the Internet of Things, millions of sensors are being deployed in cities to collect real-time data. This leads to a need for checking city states against city requirements at runtime.   
\newcontent{In this paper, we develop a novel spatial-temporal specification-based monitoring system for smart cities. }
We first describe a study of over 1,000 smart city requirements, some of which cannot be specified using existing logic such as Signal Temporal Logic (STL) and its variants. 
To tackle this limitation, 
we develop SaSTL---a novel Spatial Aggregation Signal Temporal Logic---for the efficient runtime monitoring of safety and performance requirements in smart cities.
We develop two new logical operators in SaSTL to augment STL for expressing spatial aggregation and spatial counting characteristics that are commonly found in real city requirements. We define Boolean and \newcontent{quantitative semantics}~for SaSTL in support of the analysis of city performance across different periods and locations.   
We also develop efficient monitoring algorithms that can check a SaSTL requirement in parallel over multiple data streams (e.g., generated by multiple sensors distributed spatially in a city). \newcontent{Additionally, we build a SaSTL-based monitoring tool to support decision making of different stakeholders to specify and runtime monitor their requirements in smart cities. }
We evaluate our SaSTL monitor by applying it to \newcontent{three case studies}~with large-scale real city sensing data (e.g., up to 10,000 sensors in one study). The results show that SaSTL has a much higher coverage expressiveness than other spatial-temporal logics, and with a significant reduction of computation time for monitoring requirements. 
We also demonstrate that the SaSTL monitor improves the safety and performance of smart cities via simulated experiments. 

\end{abstract}

\begin{IEEEkeywords}
Signal Temporal Logic, Runtime Verification, Smart Cities.
\end{IEEEkeywords}

%
\IEEEpeerreviewmaketitle


\section{Introduction}
\label{sect:intro}
Smart cities are emerging around the world. Examples include Chicago's Array of Things project~\cite{arrayofthings}, IBM's Rio de Janeiro Operations Center~\cite{rio2012center} and Cisco's Smart+Connected Operations Center~\cite{cisco-center}, just to name a few. Smart cities utilize a vast amount of data and smart services to enhance the safety, efficiency, and performance of city operations~\cite{ma2019data}. 
There is a need for monitoring city states in real-time to ensure safety and performance requirements~\cite{ma2017cityguard}.
If a requirement violation is detected by the monitor, the city operators and smart service providers can take actions to change the states, such as improving traffic performance, rejecting unsafe actions, sending alarms to police, etc.
The key \textbf{challenges} of developing such a monitor include how to use an expressive, machine-understandable language to specify smart city requirements, 
and how to efficiently monitor requirements that may involve multiple sensor data streams (e.g., some requirements are concerned with thousands of sensors in a smart city).

Previous works~\cite{zhang2018detecting, sheng2019case, ma2017runtime, haghighi2015spatel} have proposed solutions to monitor smart cities using formal specification languages and their monitoring machinery.
One of the latest works, CityResolver \cite{ma2018cityresolver} uses Signal Temporal Logic (STL)~\cite{Maler2004} to support the specification-based monitoring of safety and performance requirements of smart cities. 
However, STL is not expressive enough to specify smart city requirements concerning \emph{spatial} information such as \emph{``the average noise level within 1 km of all elementary schools should always be less than 50 dB''}. 
There are some existing spatial extensions of STL (e.g.,  SSTL~\cite{NenziBCLM15}, SpaTeL~\cite{haghighi2015spatel} and 
STREL~\cite{BartocciBLN17,BartocciBLNS20}, \newcontent{see~\cite{NenziBBLV20} for a recent tutorial}), which can express requirements 
such as \emph{``there should be no traffic congestion on 
all the roads in the northeast direction''}. 
But they are not expressive enough to specify requirements like \emph{``there should be no traffic congestion on all the roads on average''}, or \emph{``on 90\% of the roads''}, which require the aggregation and counting of signals in the spatial domain. To tackle these challenges and limitations, we develop a novel Spatial Aggregation Signal Temporal Logic (SaSTL), which extends STL with two new logical operators for expressing spatial aggregation and spatial counting characteristics which we demonstrate are commonly found in real city requirements. 
More specifically, this paper has the following major \textbf{contributions}:

\begin{figure*}[t!]
    \centering
    \includegraphics[width = 0.95\textwidth]{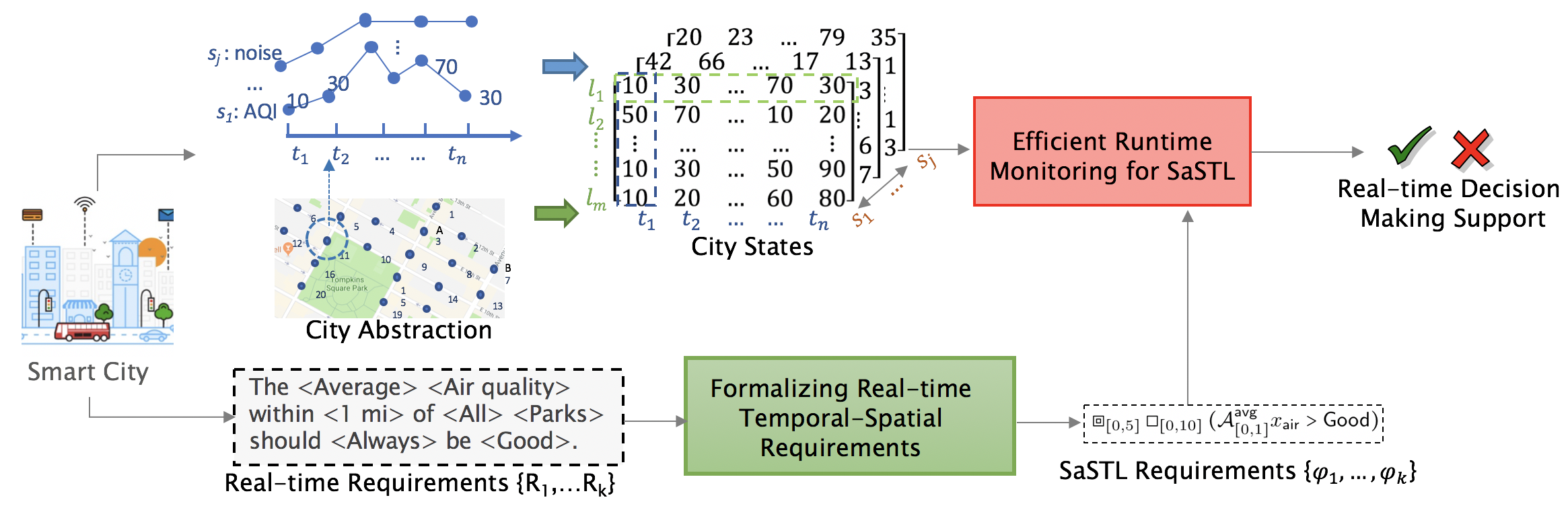}
    \caption{{A framework for runtime monitoring of real-time city requirements}}
    \label{fig:overview}
    \vspace{-1.3em}
\end{figure*}

(1) To the best of our knowledge, this is the first work studying and annotating over 1,000 real smart city requirements across different service domains to identify the gap of expressing smart city requirements with existing formal specification languages. 
As a result, we found that aggregation and counting signals in the spatial domain (e.g., for representing sensor signals distributed spatially in a smart city) are extremely important for specifying and monitoring city requirements. 

(2) Drawing on the insights from our requirements study, we develop a new specification language SaSTL, which extends STL with a \emph{spatial aggregation} operator and a \emph{spatial counting} operator.
SaSTL can be used to specify Point of Interests (PoIs), the physical distance, spatial relations of the PoIs and sensors, aggregation of signals over locations, degree/percentage of satisfaction and the temporal elements in a very flexible spatial-temporal scale. \newcontent{We define Boolean and quantitative semantics with theoretical proofs. }

(3) We compare SaSTL with some existing specification languages and show that SaSTL has a much higher coverage expressiveness (95\%) than STL (18.4\%), SSTL (43.1\%) or STREL (43.1\%) over 1,000 real city requirements.

(4) We develop novel and efficient monitoring algorithms for SaSTL. In particular, we present two new methods to speed up the monitoring performance: (i) dynamically prioritizing the monitoring based on cost functions assigned to nodes of the syntax tree, and (ii) parallelizing the monitoring of spatial operators among multiple locations and/or sensors. 

(5) We evaluate the SaSTL monitor by applying it to  monitoring real city data collected from Chicago and \newcontent{Aarhus. The results show that SaSTL monitor has the potential to help identify safety violations and support the city managers and citizens to make decisions.} We also evaluate the SaSTL monitor on a third case study of conflict detection and resolution among smart services in simulated New York City with large-scale real sensing data (e.g., up to 10,000 sensors used in one requirement). 
Results of our simulated experiments show that SaSTL monitor can help improve the city’s performance (e.g.,  21.1\%  on  the  environment and 16.6\% on public safety), with a significant reduction of  computation  time  compared  with  previous approaches.

\newcontent{(6) We develop a SaSTL monitoring tool that can support decision making of different stakeholders in smart cities. 
The tool allows users (e.g., city decision maker, citizens) without any formal method background to specify city requirements and monitor city performance easily.  }

This paper is an extended version of \cite{ma2020sastl}. 
\revision{We extend with the following new contributions.
First, we add new quantitative semantics and monitoring algorithms, with new proofs of soundness and correctness in Section III. Compared to the conference version (Boolean semantics), the new quantitative semantics presents the monitoring results with real values, and better supports decision-makers to compare the satisfaction/violation degrees between different options.  
Second, we develop new monitoring algorithms for the proposed quantitative semantics and improve the monitoring algorithms for the new spatial operators in Section IV. 
Third, we develop a monitoring tool to support monitoring and decision making using SaSTL in smart cities in Section VI. The tool also provides a way for non-expert users to input requirements in the English language. Then the tool translates the requirements to SaSTL formal specification automatically for monitoring. 
Fourth, we extend the evaluation with a new city scenario using real data from Aarhus, Denmark in Section VII. The results show that the SaSTL monitor has the potential to help identify safety violations and support city managers and citizens to make decisions. 
Last, we elaborate with more discussions on how to apply the SaSTL monitor in smart cities and extend the related work.
}

\section{Approach Overview}
\label{sect:framework}

\figref{fig:overview} shows an overview of our SaSTL runtime monitoring framework for smart cities. 
We envision that such a framework would operate in a smart city's central control center (e.g., IBM's Rio de Janeiro Operations Center~\cite{rio2012center} or Cisco's Smart+Connected Operations Center~\cite{cisco-center}) where sensor data about city states across various locations are available in real time. 
The framework would monitor city states and check them against a set of smart city requirements at runtime. The monitoring results would be presented to city managers to support decision making. 
The framework makes abstractions of city states in the following way. 
The framework formalizes a set of smart city requirements (See~\sectref{sect:moti}) to some machine checkable SaSTL formulas (See~\sectref{sect:spec}). 
Different data streams (e.g. CO emission, noise level) over temporal and spatial domains can be viewed as a 3-dimensional matrix. For any signal $s_j$ in signal domain $S$, each row is a time-series data at one location and each column is a set of data streams from all locations at one time. Next, the efficient real-time monitoring for SaSTL
verifies the states with the requirements and outputs the Boolean satisfaction to the decision makers, who would take actions to resolve the violation. To support decision making in real time, we improve the efficiency of the monitoring algorithm in \sectref{sect:alg}. \newcontent{We implement SaSTL runtime monitoring tool following this framework for city experts without any formal methods background (see \sectref{sect:tool})}.    We describe more details of the framework in the following sections.

\section{Analysis of Real City Requirements}
\label{sect:moti}

\begin{table*}[!h]
	\caption{Examples of city requirements from different domains (The key elements of a requirement are highlighted as, \temporal{temporal}, \spatial{spatial}, \aggregation{aggregation}, \ent{entity}, \condition{condition}, \comparison{comparison}.  )}
	\centering
	\tablefontsize
	\label{tab:reqexamples}
	\begin{tabular}{|L{1.5cm}|L{15cm}|}
		\hline
		\textbf{Domain} & \textbf{Example}\\\hline 
		
		\multirow{3}{*}{\textbf{Transportation}} & \comparison{Limits} \ent{vehicle idling} to \temporal{one minute} adjacent to \aggregation{any} \spatial{school, pre-K to 12th grade}, public or private, in the \spatial{City of New York}~\cite{r1}. \\\cline{2-2}
		
		& The engine, power and exhaust mechanism of each motor vehicle shall be equipped, adjusted and operated to \comparison{prevent} the escape of a trail of \ent{visible fumes or smoke} for \comparison{more than} \temporal{ten (10) consecutive seconds}~\cite{r10}.\\\cline{2-2}

		& \comparison{Prohibit} \ent{sight-seeing buses} from using \aggregation{all} \spatial{bus lanes} between the hours of \temporal{7:00 a.m. and 10:00 a.m.} on \temporal{weekdays}~\cite{r2}. \\\hline

		\multirow{2}{*}{\textbf{Energy}} & Operate the \ent{system} to \comparison{maintain} \spatial{zone} temperatures \comparison{down} to 55°F or \comparison{up to} 85°F~\cite{r3}.\\\cline{2-2}
		
		& The \aggregation{total} \ent{leakage} shall be \comparison{less than or equal} to 4 cubic feet \aggregation{per} \temporal{minute} \aggregation{per} \spatial{100 square feet} of \spatial{conditioned floor area}~\cite{r4}. \\\hline
		
		\multirow{2}{*}{\textbf{Environment}}  
		
		& LA Sec 111.03 \comparison{minimum} \ent{ambient noise level} table: \spatial{ZONE M2 and M3} -- \temporal{DAY}: 65 dB(A) \temporal{NIGHT}: 65 dB(A)~\cite{r11}. \\\cline{2-2}
		
		 & The \aggregation{total amount} of \ent{HCHO emission} should be \comparison{less than} 0.1mg \aggregation{per} m$^3$ \temporal{within an hour}, and the \aggregation{total amount} of PM10 emission should be \comparison{less than} 0.15 mg \aggregation{per} m$^3$ \temporal{within 24 hours}~\cite{r6}. \\\hline
	\multirow{2}{*}{\textbf{Emergency}}  & NYC Authorized \ent{emergency vehicles} may \condition{disregard} 4 primary rules regarding traffic~\cite{r7}.  \\\cline{2-2}
	 & \comparison{At least} one \ent{ambulance} should be equipped \aggregation{per} 30,000 population (counted \spatial{by area}) to obtain the shortest radius and fastest response time~\cite{r8}. \\\hline
	\textbf{Public Safety} &  \ent{Security staff} shall visit \comparison{at least once} \temporal{per week} in \spatial{public schools}~\cite{r9}. \\\hline
	\end{tabular}
	\vspace{-1.4em}
\end{table*}

To better understand real city requirements, we conduct a requirement study. 
\revision{We collect and statistically analyze 1000 quantitatively specified city requirements 
(e.g., standards, regulations, city codes, and laws) across different application domains, including energy, environment, transportation, emergency, and public safety from over 70 cities (e.g. New York City, San Francisco, Chicago, Washington D.C., Beijing, etc.) around the world. } Some examples of these city requirements are highlighted in \tabref{tab:reqexamples}. We identify key required features to have in a specification language and its associated use in a city runtime monitor. The summarized statistical results of the study and key elements we identified (i.e., temporal, spatial, aggregation, entity, comparison, and condition) are shown in \tabref{tab:elements}.

\begin{table}[t]
	\caption{Key elements of city requirements and statistical results from 1000 real city requirements}
	\centering
	\label{tab:elements}
	\tablefontsize
\begin{tabular}{|L{1.2cm}|l|c|L{3.2cm}|}
\hline
\textbf{Element }                                                                & \textbf{Form}             & \textbf{Number} & \textbf{Example}                                           \\ \hline
\multirow{4}{*}{\begin{tabular}[c]{@{}l@{}}\textbf{Temporal}\\ \end{tabular}}    & Dynamic Deadline &  77   & limit ... to one minute                           \\ \cline{2-4} 
                                                                            & Static Deadline  &   98  & at least once a week                              \\ \cline{2-4} 
                                                                            & Interval         &  168   & from 8am to 10am; within 24 hours;                \\ \cline{2-4} 
                                               
                                                                            & Default          &  657   & The noise (always) should not exceed 50dB.        \\ \hline
\multirow{3}{*}{\begin{tabular}[c]{@{}l@{}}\textbf{Spatial}\\ \end{tabular}}     & \textbf{PoIs/Tags}             &  \textbf{801}   & school area; all parks;                      \\ \cline{2-4} 
                                                                            & \textbf{Distance}         &  \textbf{650}   & Nearby                                            \\ \cline{2-4} 
                                                                            & Default          &  154   & (everywhere) ; (all) locations                    \\ \hline
\multirow{3}{*}{\begin{tabular}[c]{@{}l@{}}\textbf{Aggregation}\\ \end{tabular}} & \textbf{Count, Sum}       &  \textbf{256}   & in total; x out of N locations; \%;                    \\ \cline{2-4} 
                                                                            & \textbf{Average}          &  \textbf{196}   & per m$^2$;                                        \\ \cline{2-4} 
                                                                            & \textbf{Max, Min}         &  \textbf{67}   & highest/lowest value                              \\ \hline                               
\textbf{Entity} & Subject & 1000 & air quality; Buses; \\\hline
\multirow{3}{*}{\begin{tabular}[c]{@{}l@{}}\textbf{Comparison}\\ \end{tabular}}     & Value comparison &  836   & More than, less than                              \\ \cline{2-4} 
                                                                            & Boolean          &  388   & Street is blocked; should                                \\ \cline{2-4} 
                                                                            & Not              &   456  & It is unlawful/prohibited...                       \\ \hline
                                         
\multirow{2}{*}{\begin{tabular}[c]{@{}l@{}}\textbf{Condition}\\ \end{tabular}}  
                                                                            & Until            &    24 & keep... until the street is not blocked.          \\ \cline{2-4} 
                                                                            & If/Except           &  44   & If rainy, the speed limit...  \\ \hline
\end{tabular}
\vspace{-2em}
\end{table}




\textbf{Temporal}: Most of the requirements include a variety of temporal constraints, e.g. a static deadline, a dynamic deadline, or time intervals. In many cases (65.7\%), the temporal information is not explicitly written in the requirement, which usually means it should be ``always'' satisfied. 
In addition, city requirements are highly real-time driven. In over 80\% requirements, cities are required to detect requirement violations at runtime. It indicates a high demand for runtime monitoring. 


\textbf{Spatial}: A requirement usually specifies its spatial range explicitly using the Points of Interest (PoIs) (80.1\%), such as  ``park'', ``xx school'', along with a distance range (65\%). One requirement usually points to a set of places (e.g. all the schools). Therefore, it is very important for a formal language to be able to specify the spatial elements across many locations within the formula, rather than one formula for each location. 

We also found that the city requirements specify a very large spatial scale. Different from the requirements of many other CPS, requirements from smart cities are highly spatial-specific and usually involve a very large number of locations/sensors. 
For example, the first requirement in \tabref{tab:reqexamples} specifies a vehicle idling time ``adjacent to any school, pre-K to 12th grade in the City of New York''. There are about 2000 pre-K to 12th schools, even counting 20 street segments nearby each school, there are 40,000 data streams to be monitored synchronously. An efficient monitoring is highly demanded.

\textbf{Aggregation}: In 51.9\% cases, requirements are specified on the aggregated signal over an area, such as, ``the total amount'', ``average...per 100 square feet'', ``up to four vending vehicles in any given city block'', ``at least 20\% of travelers from all entrances should ...'', etc. Therefore, aggregation is a key feature for the specification language.  




\section{Formalizing Temporal-Spatial Requirements}
\label{sect:spec}

SaSTL extends STL with 
two spatial operators: a \emph{spatial aggregation} operator
and a \emph{neighborhood counting} operator.
Spatial aggregation enables combining (according to a chosen 
operation) measurements of the same type (e.g., environmental 
temperature), but taken from different locations.
The use of this operator can be suitable in requirements 
where it is necessary to evaluate the average, best or worst 
value of a signal measurement in an area close to the desired location.
The neighborhood counting operator allows measuring the number/percentage of neighbors of a location that satisfy a certain requirement. 


\subsection{SaSTL Syntax}
We define a multi-dimensional \emph{spatial-temporal signal} as $\omega: \mathbb{T} \times L \to \{{\mathbb{R}\cup\{\bot\}\}} ^n$,
where $\mathbb{T}=\mathbb{R}_{\ge 0}$, represents the continuous time and $L$ is the set of locations. We define $X= \{x_1, \cdots, x_n \}$ as the set of 
variables for each location. 
Each variable can assume a real value $ v \in \mathbb{R}$ or
is undefined for a particular 
location ($x_i = \bot$).
We denote by $\pi_{x_{i}}(\omega)$ as the projection of $\omega$ on its component variable $x_i \in X$.  We define $P = \{p_1, \cdots, p_m \}$ a set of propositions (e.g. $\{\mathsf{School, Street, Hospital}, \cdots\}$ ) and 
$\mathcal{L}$ a labeling function $\mathcal{L}: L \rightarrow 2^{P}$ that assigns for each location 
the set of the propositions that 
are true in that location. 

A \emph{weighted undirected graph} is a tuple $G=(L, E, \eta)$ where 
$L$ is a finite non-empty set of nodes representing locations,
$E \subseteq L \times L$ is the set of edges connecting nodes,
and $\eta: E \to \mathbb{R}_{\ge 0}$ is a cost function over edges.
We define the \emph{weighted distance} between two locations $l, l' \in L$ as
$$ d(l,l'):= \min\{\sum_{e\in \sigma}\eta(e) \ |\ \sigma \mbox{ is a path between } l \mbox{ and } l'\}. $$

Then we define the spatial domain $\ra$ as, 

\begin{equation*}
\begin{array}{cl}
     \ra :=&  ([d_1, d_2],\psi) \\
     \psi :=& \top\;|\;p\;|\;\neg\;\psi\;|\;\psi\;\vee\;\psi 
\end{array}
\end{equation*}
 
\revision{where $[d_1, d_2]$ defines a spatial interval with $d_1 < d_2$ and $d_1,d_2 \in \mathbb{R}$, and $\psi$ specifies the property 
over the set of propositions
that must hold in each location. Intuitively, it draws two circles with radius $r_1=d_1$ and $r_2=d_2$, and the locations $l\sat \psi$ between these two circles are selected. }
In particular, 
$\ra = ( [0,+\infty), \top)$ indicates the whole spatial domain. 
We denote $L_{([d_1,d_2], \psi)}^l:=\{l' \in L \ | \ 0 \le d_1 \le d(l, l') \le d_2 \mbox{ and } 
\mathcal{L}(l') \sat \psi \}$ 
as the set of locations at a distance between $d_1$ and $d_2$ from $l$ for which $\mathcal{L}(l')$ satisfies $\psi$.    
We denote the set of non-null values for signal variable $x$ at time point $t$ location $l$ over locations in $\nb^l$ by 
\begin{equation*}
\scriptsize
 \nbx:=\{\pi_x(\omega)[t, l'] \ | \ l' \in \nb^l \mbox{ and } \pi_x(\omega)[t, l'] \neq \bot\}.   
\end{equation*}
We define a set of operations $\op(\nbx)$ for $\op \in \{\max, \min, \mathrm{sum}, \avg\}$  when $\nbx \neq \emptyset$ 
that computes the maximum, minimum, summation and average of values in the set $\nbx$, respectively. 
To be noted, Graph $G$ and its weights between nodes are constructed flexibly based on the property of the system. For example, we can build a graph with fully connected sensor nodes and their Euclidean distance as the weights when monitoring the air quality in a city; or we can also build a graph that only connects the street nodes when the two streets are contiguous and apply Manhattan distance. 
It does not affect the syntax and semantics of SaSTL. 


The syntax of SaSTL is given by 
\begin{equation*} 
\begin{array}{cl}
\varphi  :=& x \sim c\;| \
 \neg \varphi \;| \
\varphi_1 \land \varphi_2 \;|\
\varphi_1 \until \varphi_2 \;|\
\ag^{\op} x \sim c \;|\
\ct^{\op} \varphi \sim c \\
\end{array}
\end{equation*}

where $x \in X$, $\sim \in \{<, \le\}$, $c \in \mathbb{R}$ is a constant, 
$I \subseteq \mathbb{R}_{> 0}$ is a real positive dense time interval, 
\revision{$\until$ is the \emph{bounded until} temporal operators from STL. 
The \emph{always} (denoted $\always$) and \emph{eventually} (denoted $\eventually$) temporal operators can be derived the same way as in STL, where $\eventually \varphi \equiv \mathsf{true} \ \until \varphi$, and $\always \varphi \equiv \neg \eventually \neg \varphi$.}


In SaSTL, we define a set of spatial \emph{aggregation} operators $\ag^{\op} x \sim c$ for 
$\op \in \{\max, \min, \mathrm{sum}, \avg\}$
that evaluate the aggregated product of traces $\op(\nbx)$ over a set of locations $l \in \nb^l$.
We also define a set of new spatial \emph{counting} operators $\ct^{\op} \varphi \sim c$ for 
$\op \in \{\max, \min, \mathrm{sum}, \avg\}$ that counts the satisfaction of traces over a set of locations. 
More precisely, we define 
$\ct^{\op} \varphi = \op(\{g((\omega, t, l')  \sat \varphi) \ | \ l' \in \nb^l\})$, where $g((\omega, t, l)  \sat \varphi)) = 1$ if $(\omega, t, l)  \sat \varphi$, otherwise $g((\omega, t, l)  \sat \varphi)) = 0$. 
From the new \textit{counting} operators, we also derive the \emph{everywhere} operator as $\ew \varphi \equiv \ct^{\mathrm{min}} \varphi > 0$, and \emph{somewhere} operator as $\sw \varphi \equiv \ct^{\mathrm{max}} \varphi > 0$.
In addition, $\ct^{\mathrm{sum}} \varphi$ specifies the total number of locations that satisfy $\varphi$ and $\ct^{\mathrm{avg}} \varphi$ specifies the percentage of locations satisfying $\varphi$. 

We now illustrate how to use SaSTL to specify various city requirements, especially for the spatial aggregation and spatial counting, and how important these operators are for the smart city requirements using examples below.

\begin{example} [Spatial Aggregation]
Assume we have a requirement, ``The average noise level in the school area (within 1 km) in New York City should always be less than 50 dB and the worst should be less than 80 dB in the next 3 hours'' is formalized as,
   $\everywhere_{([0, +\infty), \mathsf{School})}\always_{[0,3]}( (\agr_{([0,1], \top)}^{\avg} x_\mathsf{Noise} < 50) \land (\agr_{([0,1], \top)}^{\mathsf{max}} x_\mathsf{Noise} < 80))$.
$([0, +\infty), \mathsf{School})$ selects all the locations labeled as ``school'' within the whole New York city (${[0, +\infty)}$) (predefined by users). 
$\always_{[0, 3]}$ indicates this requirement is valid for the next three hours. $(\agr_{([0,1], \top)}^{\avg} x_\mathsf{Noise} < 50) \land (\agr_{([0,1], \top)}^{\mathsf{max}} x_\mathsf{Noise} < 80)$ calculates the average and maximal values in 1 km for each ``school'', and compares them with the requirements, i.e. 50 dB and 80 dB. 
\label{ex:sytax}
\vspace{-0.5em}
\end{example}

Without the spatial aggregation operators, STL and its extended languages cannot specify this requirement. First, they are not able to first dynamically find all the locations labeled as ``school''. 
To monitor the same spatial range, users have manually get all traces from schools, and then repeatedly apply this requirement to each located sensor within 1 km of a school and do the same for all schools.  
More importantly, STL and its extended languages could not specify ``average'' or  ``worst'' noise level. Instead, it only monitors each single value, which is prone to noises and outliers and thereby causes inaccurate results.

\begin{example}[Spatial Counting]
A requirement that ``At least 90\% of the streets, the particulate matter (PMx) emission should not exceed \textit{Moderate} in 2 hours'' is formalized as
  $\mathcal{C}_{([0,+\infty), \mathsf{Street})}^\avg(\always_{[0,2]} ( x_\mathsf{PMx} < \mathsf{Moderate})) > 0.9$.
${\mathcal{C}_{([0,+\infty),\mathsf{Street})}^\avg} \varphi > 0.9$ represents the percentage of satisfaction is larger than 90\%. 
Specifying the percentage of satisfaction is very common and important among city requirements. 
\label{ex2:syntax}
\end{example}





\subsection{SaSTL Semantics}
We define the SaSTL semantics as the \emph{satisfiability relation}
$(\omega, t, l) \sat \varphi$, indicating that the spatio-temporal signal $\omega$ satisfies a formula $\varphi$ at the time point $t$ in location $l$ when $\pi_v(\omega)[t, l]\neq \bot$ and $\nbx \neq \emptyset$.  We define that $(\omega, t, l)  \sat \varphi$ if $\pi_v(\omega)[t, l] = \bot$.

\vspace{-1em}
\begin{alignat*}{2}
    (\omega, t, l) & \sat x \sim c 
        && \eqdef \pi_x(\omega)[t, l] \sim c  \\
    (\omega, t, l) & \sat \neg \varphi
        && \eqdef (\omega, t, l) \not \sat \varphi \\
    (\omega, t, l) & \sat \varphi_1 \land \varphi_2
        && \eqdef (\omega, t, l) \sat \varphi_1 \mbox{ and } (\omega, t, l) \sat \varphi_2 \\
    (\omega, t, l) & \sat \varphi_1 \until \varphi_2 
        && \eqdef \exists t' \in (t+I) \cap \mathbb{T}: (\omega, t', l) \sat \varphi_2 \\
& &&         \mbox{ and } \forall t'' \in (t, t'), (\omega, t'', l) \sat \varphi_1 \\
    (\omega, t, l) & \sat  \ag^{\op} x \sim c   
        && \eqdef \op (\nbx  ) \sim c  \\
    (\omega, t, l) & \sat  \ct^{\op} \varphi \sim c   
        && \eqdef     \op(\{g((\omega, t, l')  \sat \varphi) \ | \ l' \in \nb^l\}) \sim c \\
\end{alignat*} 

\newcontent{where, for counting operator $(\omega, t, l) \sat  \ct^{\op} \varphi \sim c$, the valid ranges for $c$ are 
$c\in [0,1)$ when $\mathsf{op = sum/min}$, and 
$c\in [0,N]$ when $\mathsf{op = sum/min}$. Otherwise (e.g., $c<0$), the requirement is trivially satisfied or violated. 
\begin{example}
Following \expref{ex:sytax}, checking the city states with a requirement, 

$\everywhere_{([0, +\infty), \mathsf{Hospital})}\always_{[0,5]}( (\agr_{([0,500],\top)}^{\avg} x_\mathsf{AQI} < 50) \land (\agr_{([0,500],\top)}^{\mathsf{max}} x_\mathsf{AQI} < 80))$, 

to start with, assuming we have the AQI level data from a number of sensors within 500 meters of one of the hospital, the sensor readings in 5 hours as, \{[51, ..., 11], [80, ..., 30],..., [40, ..., 30]\}, $\varphi_t = (\agr_{([0,500],\top)]}^{\avg} x_\mathsf{AQI} < 50) \land (\agr_{([0,500],\top)}^{\mathsf{max}} x_\mathsf{AQI} < 80)$, then, we check $\varphi_t$ for this hospital at each time,

at $t = 1$, $\avg(51, ..., 40) > 50 \land \mathsf{max}(51, ..., 40) < 80$,
thus, $\varphi_{t1} = False$,

at $t = 2$, $\avg(49, ..., 20) < 50 \land \mathsf{max}(49, ..., 20) > 80$, thus, $\varphi_{t1} = False$,

...

at $t = 5, \avg(11, ..., 30) < 50 \land \mathsf{max}(11, ..., 30) < 80$,
thus, $\varphi_{t1} = True$.

Thus, we have $\always_{[0,5]}\varphi_t = False$. 

Next, the monitor checks all qualified hospitals the same way and reaches the final results, 

$\everywhere_{([0, +\infty),\mathsf{Hospital})}\always_{[0,5]} ((\agr_{([0,500],\top)}^{\avg} x_\mathsf{AQI} < 50) \land (\agr_{([0,500],\top)}^{\mathsf{max}} x_\mathsf{AQI} < 80)) = False$. 

\label{ex:quali}
\hspace{0.5cm}
\end{example}

In a real scenario, the monitor algorithm can also decide to terminate the monitor and return the False result when at $t=1$, because the always operator returns False as long as a one-time violation occurs. Similarly, the everywhere operator will also return False when the first hospital violates the requirement. }

\newcontent{
\begin{definition}[Quantitative Semantics]

Let $x > c$ be a numerical predicate,
we then define the robustness degree (i.e. 
the quantitative satisfaction) function $\rho(\varphi, \omega, t, l)$
for an SaSTL formula over a spatial-temporal signal $\omega$ as,

\begin{equation*}
\small
\begin{array}{ll}
\rho(x \sim c, \omega, t, l)&= \pi_x(\omega)[t, l] - c \\
  \rho(\neg \varphi, \omega, t, l)&= - \rho(\varphi, \omega, t, l)\\
  \rho(\varphi_1 \vee \varphi_2, \omega, t, l)  &=  \max\{\rho(\varphi_1, \omega, t, l), \rho(\varphi_2, \omega, t, l) \}\\
\rho(\varphi_1 \until \varphi_2, \omega, t, l)  & =\sup_{t'\in (t + I) \cap \mathbb{T}} (\min\{\rho(\varphi_2, \omega, t', l), \\ & \inf_{t''\in[t,t']}(\rho(\varphi_1, \omega, t'', l)) \})\\
\\
\rho(\ag^{\op} x \sim c, \omega, t, l)& =\begin{cases}
            \frac{\mathsf{sum} (\nbx) - c}{|\nbx|} ~~~& \mathsf{op= sum} \\
            \op (\nbx) - c ~~& \mathsf{op} \in \{ \mathsf{max, min, avg}\}
        \end{cases} \\ 
\rho(\ct^{\op} \varphi \sim c, \omega, t, l)  &\\
\end{array}
\end{equation*}
\begin{equation*}
\small
=\begin{cases}
        \max_{l' \in \nb^l}\{\rho(\varphi,\omega,t,l') \} & \mathsf{op = max} \\
        \min_{l' \in \nb^l}\{\rho(\varphi,\omega,t,l') \} & \mathsf{op = min}\\
        \smallfunction(\ceil[\big]{c}, \{\rho(\varphi,\omega,t,l') \ | \ l' \in \nb^l \}) & \mathsf{op = sum}\\
        \smallfunction(\ceil[\big]{c \times |\nb^l |}, \{\rho(\varphi,\omega,t,l') \ | \ l' \in \nb^l \}) & \mathsf{op = avg}\\
        \end{cases} 
\end{equation*}

\end{definition}

where we define $\smallfunction(k,S)$ as a function that returns the $k$th smallest number of set $S$, $|S|>0$, and $0 \leq k \leq |S|$. 
For $\ct^{\op} \varphi \sim c$, when $\mathsf{op} = \mathsf{sum}$, it requires that there are at least $\ceil[\big]{c}$ locations that satisfy $\varphi$, thus, we denote the $\ceil[\big]{c}$th smallest robustness value from $\{\rho(\varphi,\omega,t,l') \ | \ l' \in \nb^l \}$ as the robustness value of this formula. $[c]$ indicates the smallest integer that is larger than or equal to $c$. 
Similarly, when $\mathsf{op} = \mathsf{avg}$, the formula is converted as there are at least $\ceil[\big]{c\times |\nb^l |}$ locations that satisfy $\varphi$, thus, we denote the $\ceil[\big]{c\times |\nb^l |}$th smallest robustness value from $\{\rho(\varphi,\omega,t,l') \ | \ l' \in \nb^l \}$ as the robustness value of this formula.   
Same as the Boolean semantics, the valid ranges for $c$ are 
$c\in [0,1)$ when $\mathsf{op = sum/min}$, and 
$c\in [0,N]$ when $\mathsf{op = sum/min}$. Otherwise (e.g., $c<0$), the requirement is trivially satisfied or violated. 


\begin{example}
Assuming we have data (1,2,3), (2,3,4), (4,5,7) from three locations satisfying $\mathcal{D}$, thus, 

\begin{itemize}
\small
    \item $\rho(\ct^\mathsf{max}(\always_{[0,2]}(x>5))>0) = \rho (\ct^\mathsf{max}(\{-4,-3,2\})>0) = 2$
    \item $\rho(\ct^\mathsf{min}(\always_{[0,2]}(x>5))>0) = \rho (\ct^\mathsf{min}(\{-4,-3,2\})>0) = -4$
    \item $\rho(\ct^\mathsf{sum}(\always_{[0,2]}(x>5))>1) = \rho (\ct^\mathsf{sum}(\{-4,-3,2\})>1) = -3$
    \item $\rho(\ct^\mathsf{avg}(\always_{[0,2]}(x>5))>0.2) = \rho (\ct^\mathsf{avg}(\{-4,-3,2\})>0.2) = 2$
\end{itemize}










\label{ex:quan}
\end{example}

The quantitative semantics of SaSTL inherit the two fundamental properties of STL, i.e., soundness and correctness. We give the formal definitions below. 


\begin{theorem}[Soundness]
Let $\varphi$ be an STL formula, $\omega$ a trace and $t$ a time,
\begin{equation*}
    \begin{array}{cc}
         \rho(\varphi, \omega, t, l) > 0 &  \Rightarrow (\omega, t, l) \sat \varphi \\
        \rho(\varphi, \omega, t, l) < 0 &  \Rightarrow (\omega, t, l) \not\sat \varphi
    \end{array}
\end{equation*}

\end{theorem}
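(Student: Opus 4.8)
The plan is to prove both implications simultaneously by structural induction on $\varphi$, since the negation and counting cases each require access to both halves of the inductive hypothesis. For the base case $\varphi = (x \sim c)$, the claim is immediate from $\rho(x\sim c,\omega,t,l)=\pi_x(\omega)[t,l]-c$: a positive (resp.\ negative) robustness is exactly the statement $\pi_x(\omega)[t,l] \sim c$ (resp.\ its negation), which is the Boolean predicate. For negation I would use $\rho(\neg\varphi,\cdot)=-\rho(\varphi,\cdot)$ to flip the sign and invoke the \emph{other} half of the hypothesis; for $\varphi_1\vee\varphi_2$ (and dually for $\wedge$) the $\max$/$\min$ structure of the robustness lets a positive value be traced back to a strictly positive disjunct, which the hypothesis converts into Boolean satisfaction.

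The temporal operator $\until$ is the first place requiring care with the suprema and infima. If $\rho(\varphi_1\until\varphi_2,\omega,t,l)>0$, then because the value is a supremum over $t'$, there must exist a witness $t'\in(t+I)\cap\mathbb{T}$ at which the inner quantity $\min\{\rho(\varphi_2,\omega,t',l),\inf_{t''\in[t,t']}\rho(\varphi_1,\omega,t'',l)\}$ is already strictly positive (a supremum exceeding $0$ is approached within any tolerance, so some element exceeds half of it). Positivity of this minimum forces $\rho(\varphi_2,\omega,t',l)>0$ and $\inf_{t''}\rho(\varphi_1,\omega,t'',l)>0$; the latter, being an infimum, yields $\rho(\varphi_1,\omega,t'',l)>0$ at \emph{every} $t''\in[t,t']$. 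Applying the hypothesis pointwise then reproduces exactly the Boolean until clause, and the $\rho<0$ direction is handled dually by reading the supremum as ``every candidate witness fails.'' The spatial aggregation operators are comparatively direct: for $\op\in\{\max,\min,\avg\}$ the robustness is $\op(\nbx)-c$, so its sign coincides with the truth of $\op(\nbx)\sim c$, and the $\sm$ case only divides by $|\nbx|>0$, which preserves sign.

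The main obstacle is the counting operator $\ct^{\op}\varphi\sim c$, whose robustness is an order statistic $\smallfunction(k,\{\rho(\varphi,\omega,t,l')\mid l'\in\nb^l\})$ of the children's robustness values. Here the argument becomes combinatorial: by the inductive hypothesis each location $l'$ with positive child-robustness satisfies $\varphi$ and each with negative child-robustness does not, so the \emph{number} of satisfying locations is governed by how many child-robustness values are positive. I would first establish the bridging fact that the $k$th smallest value being positive is equivalent to at least $|\nb^l|-k+1$ of the values being positive, and then match this count against the Boolean threshold: $\mathsf{max}$ and $\mathsf{min}$ reduce to the somewhere/everywhere reading, while for $\mathsf{sum}$ and $\mathsf{avg}$ the chosen index ($\ceil{c}$ and $\ceil{c\times|\nb^l|}$ respectively) must be aligned with ``at least that many neighbors satisfy $\varphi$.'' The delicate point, which I expect to be the crux, is getting this index bookkeeping exactly right at the boundary (integer thresholds, and strict versus non-strict comparison), so that a positive order statistic certifies enough satisfying neighbors and a negative one certifies too few; the explicit computations in Example~\ref{ex:quan} are the sanity check I would use to fix the convention. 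Finally, I note that the theorem deliberately says nothing about the case $\rho=0$, so no boundary analysis is needed there, and the second implication is obtained throughout by running the same induction with the inequalities reversed.
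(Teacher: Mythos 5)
Your proposal is correct and follows essentially the same route as the paper's proof: a simultaneous structural induction on both implications, with the negation case invoking the opposite half of the inductive hypothesis, the conjunction/disjunction cases reading off the sign of a $\min$/$\max$, the until case extracting a witness time, and the aggregation cases reducing to the sign of $\op(\nbx)-c$ (with the sum case's division by $|\nbx|>0$ preserving sign, a detail the paper silently skips). The one place you genuinely diverge---and improve on the paper---is the counting operator. The paper's proof simply asserts that positivity of $\smallfunction(\lceil c\rceil, \{\rho(\varphi,\omega,t,l')\mid l'\in \nb^l\})$ yields ``at least $\lceil c\rceil$ locations with positive robustness,'' which is valid only if $\smallfunction$ returns the $k$th \emph{largest} element; under the paper's written definition ($k$th \emph{smallest}), positivity of the order statistic certifies at least $|\nb^l|-\lceil c\rceil+1$ positive values, which is exactly the bridging fact you state. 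So your index bookkeeping is the correct statement, and the ``delicate point'' you flag is real rather than pedantic: the written definition of $\smallfunction$ and the computations in Example~\ref{ex:quan} are only reconcilable under the $k$th-largest reading, and carrying your plan through would force that convention in order to make the sum/avg cases sound (with the literal $k$th-smallest reading, soundness actually fails, e.g., child robustness values $\{1,1,-5\}$ with $c=2$). Your handling of non-attained suprema in the until case is a similar, smaller refinement, since the paper writes $\max$/$\min$ and tacitly assumes attainment. In short: same skeleton as the paper, but your version closes a gap that the paper's own counting case glosses over.
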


Secondly, if $\omega$ satisfies $\varphi$  at time $t$, any other trace $\omega'$ whose point-wise distance from $\omega$ is smaller than $\rho(\varphi, \omega, t, l)$ also satisfies $\varphi$ at time $t$.  

\begin{theorem}[Correctness]
Let $\varphi$ be an STL formula, $\omega$ and $\omega'$  traces over the same time and spatial domains, and $t, l\in dom(\varphi, \omega)$, then

\begin{equation*}
    (\omega, t, l) \sat \varphi~and~||\omega - \omega'||_\infty < \rho(\varphi, \omega, t, l) \Rightarrow (\omega', t, l) \sat \varphi
\end{equation*}

\end{theorem}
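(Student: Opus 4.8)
The plan is to reduce the correctness theorem to a stronger quantitative statement — that the robustness degree is non-expansive (1-Lipschitz) in the supremum norm on traces — and then combine this with the soundness theorem. Concretely, I would first establish the auxiliary lemma
\[
|\rho(\varphi,\omega,t,l) - \rho(\varphi,\omega',t,l)| \le \|\omega-\omega'\|_\infty
\]
for every SaSTL formula $\varphi$ and every $t,l$. Granting this lemma, correctness is immediate: since $\|\omega-\omega'\|_\infty \ge 0$, the hypothesis $\|\omega-\omega'\|_\infty < \rho(\varphi,\omega,t,l)$ forces $\rho(\varphi,\omega,t,l) > 0$; the lemma then gives $\rho(\varphi,\omega',t,l) \ge \rho(\varphi,\omega,t,l) - \|\omega-\omega'\|_\infty > 0$, and the soundness theorem ($\rho>0 \Rightarrow {}\sat$) yields $(\omega',t,l)\sat\varphi$. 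Note that the given hypothesis $(\omega,t,l)\sat\varphi$ is in fact subsumed by the norm hypothesis (which already forces $\rho>0$), so the real engine is the Lipschitz bound plus soundness.

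The lemma I would prove by structural induction on $\varphi$, relying throughout on the elementary fact that $\max$, $\min$, $\sup$, $\inf$ and averaging are non-expansive: e.g. $|\max_i a_i - \max_i b_i| \le \max_i|a_i - b_i|$ and $|\tfrac1n\sum_i a_i - \tfrac1n\sum_i b_i|\le \max_i|a_i-b_i|$. The base case $x\sim c$ is direct from $|\pi_x(\omega)[t,l]-\pi_x(\omega')[t,l]|\le\|\omega-\omega'\|_\infty$; negation preserves the bound since it only flips sign; conjunction, disjunction and the bounded until follow by pushing the above non-expansiveness through $\max$, $\min$, $\sup_{t'}$ and $\inf_{t''}$ while invoking the inductive hypothesis pointwise in $t'$ and $t''$. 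For the aggregation operator $\ag^{\op}$ the $\mathsf{max}/\mathsf{min}/\mathsf{avg}$ cases reuse the same non-expansiveness over the location set $\nb^l$; the $\mathsf{sum}$ case is where the normalisation by $|\nbx|$ in the definition of $\rho$ is essential — it is exactly this division that turns an otherwise unbounded sum into a non-expansive average of the per-location differences.

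The part I expect to be the main obstacle — and the genuinely new ingredient beyond the classical STL argument — is the spatial counting operator $\ct^{\op}$ with $\op\in\{\mathsf{sum},\mathsf{avg}\}$, whose robustness is the $k$-th smallest value $\delta(k,\cdot)$ of the location-indexed robustness set. Here I would prove a dedicated order-statistic lemma: if $\{a_{l'}\}$ and $\{b_{l'}\}$ are indexed by the same finite set with $|a_{l'}-b_{l'}|\le\varepsilon$ for all $l'$, then $|\delta(k,\{a_{l'}\})-\delta(k,\{b_{l'}\})|\le\varepsilon$ for every admissible $k$. The argument is a counting one: letting $B$ be the indices of the $k$ smallest $b$-values, each $i\in B$ satisfies $a_i\le b_i+\varepsilon\le \delta(k,\{b_{l'}\})+\varepsilon$, so at least $k$ of the $a$-values lie below $\delta(k,\{b_{l'}\})+\varepsilon$, giving $\delta(k,\{a_{l'}\})\le\delta(k,\{b_{l'}\})+\varepsilon$; the reverse inequality is symmetric. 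Applying this with $\varepsilon=\|\omega-\omega'\|_\infty$ and the inductive hypothesis $|\rho(\varphi,\omega,t,l')-\rho(\varphi,\omega',t,l')|\le\|\omega-\omega'\|_\infty$ closes the $\mathsf{sum}$ and $\mathsf{avg}$ counting cases, with $\mathsf{max}$ and $\mathsf{min}$ recovered as $k=|\nb^l|$ and $k=1$. A final bookkeeping point is that the value-perturbation bound presupposes $\omega$ and $\omega'$ share the same defined/$\bot$ pattern, so that $\nb^l$ and $|\nbx|$ coincide for both traces; I would record this as part of the assumption that the traces live over the same time and spatial domains, ensuring the aggregation and counting index sets match on both sides.
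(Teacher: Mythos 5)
Your proposal is correct, but it takes a genuinely different route from the paper. The paper proves correctness by direct structural induction on the implication itself: for each case it argues inline that the perturbed trace's robustness stays positive (e.g., $\rho(x\sim c,\omega',t,l)\ge\rho(x\sim c,\omega,t,l)-\|\omega-\omega'\|_\infty>0$ for predicates, a one-sided bound $\op(\alpha_{\ra}^x(\omega,t,l))-\op(\alpha_{\ra}^x(\omega',t,l))\le\|\omega-\omega'\|_\infty$ for aggregation, and a counting argument over locations whose robustness exceeds $\|\omega-\omega'\|_\infty$ for $\ct^{\op}$), invoking soundness case by case. You instead factor the whole argument through a single stronger lemma — the robustness function is non-expansive (1-Lipschitz) in $\|\cdot\|_\infty$ — and obtain correctness in one line from that lemma plus soundness. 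What your route buys: (i) modularity and reusability of the Lipschitz bound; (ii) a clean, symmetric treatment of negation, which is the weak point of the paper's induction (the paper's negation case asserts $\rho(\varphi',\omega',t,l)<0$ "by the inductive assumption," but the stated inductive hypothesis is a satisfaction-transfer statement, not a sign bound, so a dual form is tacitly needed — your two-sided bound dispenses with this); (iii) the observation that the hypothesis $(\omega,t,l)\sat\varphi$ is redundant given soundness; and (iv) an explicit order-statistic lemma for the $k$-th smallest value, which makes the $\ct^{\mathsf{sum}}/\ct^{\mathsf{avg}}$ cases rigorous where the paper's version again conflates satisfaction with strict positivity of robustness. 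What the paper's route buys is brevity: it never states or proves the stronger quantitative lemma, and in the easy cases its inline arguments are shorter. Your bookkeeping remark that $\omega$ and $\omega'$ must share the same defined/$\bot$ pattern (so that $\nb^l$ and the aggregation index sets agree) is a real gap in the paper's statement that your formulation makes explicit.
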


In summary, the qualitative value indicates if the signal (i.e. city data) satisfies the requirement.  
The quantitative value indicates the satisfaction or dissatisfaction degree. If it is larger or equal than zero, it means that the requirement is satisfied. The larger the value is, the more the requirement is satisfied. On the contrary, if the value is smaller than zero, it means the requirement is not satisfied. The smaller the value is, the more the requirement is dissatisfied. 
}

\section{Efficient Monitoring for SaSTL}
\label{sect:alg}

In this section, we first present both Boolean and quantitative monitoring algorithms for SaSTL, then describe two optimization methods to speed up the monitoring performance.


\subsection{Monitoring Algorithms for SaSTL}
The \textit{inputs} of the monitor are the SaSTL requirements 
$\varphi$ (including the time $t$ and location $l$), a weighted undirected graph $G$ and the temporal-spatial data $\omega$. In smart cities, the data on city states is collected continuously or periodically.


\newcontent{
For the Boolean monitoring algorithm, 
the \textit{output} for each requirement
is a Boolean value indicating whether the requirement is 
satisfied or not. For the quantitative monitoring algorithm (\algref{alg:sastlQuanti}), the \textit{output} for each requirement
is a number indicating the satisfaction degree of the requirement.  
To start with, the monitoring algorithm parses $\varphi$ to sub-formulas and calculates the satisfaction for each operation recursively. 
We derived operators $\always$ and $\eventually$ from $\until$, and operators $\everywhere$ and $\somewhere$ from $\ct^{\mathsf{op}}\sim c$, so we only show the algorithms for $\until$ and $\ct^{\mathsf{op}}\sim c$.  


\noindent
\begin{algorithm}[t]
\tablefontsize
\newcontent{
  \SetKwFunction{MonitorQ}{MonitorQ}
  \SetKwProg{Fn}{Function}{:}{}
  {
      \SetKwInOut{Input}{Input}
      \SetKwInOut{Output}{Output}
      \SetKwFor{Case}{Case}{}{}
      \Input{SaSTL Requirement $\varphi$, Signal $\omega$, Time $t$, Location $l$, weighted undirected graph $G$}
      
      \Output{Satisfaction Value $\rho$}
      
      \Begin{
            \Switch{$\varphi$} {

                   \Case{$x\sim c$}{
                        \Return $\pi_x(\omega)[t, l] - c$;
                   }

                   \Case{$\neg \varphi$}{
                        \Return - $\MonitorQ(\varphi,\omega, t, l, G)$;
                   }
 
                   \Case{$\varphi_1 \land \varphi_2$}{
                        \Return $\min(\MonitorQ(\varphi_1,\omega, t, l, G),$ \\ $\MonitorQ(\varphi_2,\omega, t, l, G))$; 
                    }
                    

          \Case{$ \varphi_1 U_I \varphi_2$ }{
             $\textbf{Real}~v := -\infty$\\
             \For {$t'\in (t + I) \cap \mathbb{T}$}
             {
            $v':= \MonitorQ({\varphi_2,\omega, t',l,G})$\\
            \For {$t'' \in [t,t']$}{
            $v':= \min\{v', \MonitorQ({\varphi_2,\omega, t'',l,G})\}$
                       }
         $v = \max\{v, v'\}$         
                   
             }
        
            \Return v;

                    }

                    \Case{$\ag^{\op} x \sim c$   \Comment*[r]{See Alg. \ref{alg:AggregationQ}.} }
                    {
                          \Return $\mathsf{AggregateQ}(x, c, op, \ra, t, l, G)$;
                    }
                    
                    \Case{$\ct^{\op} \varphi \sim c$   \Comment*[r]{See Alg. \ref{alg:CountingQ}.}}
                    { 
                          \Return $\mathsf{CountingNeighboursQ}(\varphi, c, op, \ra, t, l, G) $; 
                    }
           }
       }
    }
}
\caption{\newcontent{SaSTL quantitative monitoring algorithm {$\mathsf{MonitorQ}(\varphi,\omega, t, l, G)$}}}
\label{alg:sastlQuanti}
\end{algorithm}

We present the quantitative monitoring algorithms of the operators $\ag^{\mathsf{op}}$ and $\ct^{\mathsf{op}}$  in \algref{alg:AggregationQ} and \algref{alg:CountingQ}, respectively. 
We apply distributed parallel algorithm $\mathsf{deScan()}$~\cite{ladner1980parallel} to accelerate the process of searching locations that satisfy $\ra$. 
As we can tell from the algorithms, essentially, $\ag^{\mathsf{op}}$ calculates the aggregated values on the signal over a spatial domain, while $\ct^{\mathsf{op}}$ calculates the aggregated results over spatial domain. 
For the quantitative monitoring algorithm (as presented in \algref{alg:sastlQuanti}), the \textit{output} for each requirement
is a robustness value indicating its satisfaction degree. Similar to the Boolean monitoring algorithm, the quantitative monitoring algorithm also parses $\varphi$ to sub-formulas and calculates the satisfaction for each operation recursively. }

The time complexity of monitoring the logical and temporal operators of SaSTL is the same as STL~\cite{donze2013efficient}. The time complexity to monitor classical logical operators or basic propositions such as $\neg x$, $\land$ and $x\sim c$ is $O(1)$. The time complexity to monitor  temporal operators such as $\always_{I}$, $\eventually_{I}$, $\until$ is $O(T)$, where $T$ is the total number of samples within time interval $I$.
In this paper, we present the time complexity analysis for the spatial operators (Lemma \ref{lemma:spatial}) and the new SaSTL monitoring algorithm (Theorem \ref{th:timeAlg}). 
The total number of locations is denoted by $n$.  We assume that the positions of 
the locations cannot change in time (a fixed grid). We can pre-compute all the distances between locations and store them in 
an array of range trees~\cite{Lueker78} (one range tree for each 
location).
We further denote the monitored formula as $\phi$, which can be represented by a syntax tree, and let $|\phi|$ denote the total number of nodes in the syntax tree (number of operators). 

\begin{lemma}[Complexity of spatial operators]

The time complexity to monitor at each location $l$ at time $t$ the satisfaction of a spatial operator such as  $\everywhere_{\ra}$, $\somewhere_{\ra}$, $\ag^\mathsf{op}$, and $\ct^\mathsf{op}$ is $O(log(n) + |L|)$ 
where L is the set of locations at distance within 
the range $\ra$ from $l$.
\label{lemma:spatial}
\end{lemma}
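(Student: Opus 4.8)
The plan is to separate the per-evaluation cost at a fixed time $t$ and location $l$ into a \emph{retrieval} phase, which identifies the relevant neighborhood $\nb^l = L_{([d_1,d_2],\psi)}^l$, and an \emph{aggregation} phase, which folds the precomputed per-location values over that neighborhood. The key enabling assumption, already stated above, is that the grid is fixed, so all pairwise weighted distances $d(l,l')$ are precomputed once and, for each location $l$, the remaining locations are stored keyed by their distance from $l$ in a one-dimensional range tree~\cite{Lueker78} (the ``array of range trees,'' one entry per $l$).

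First I would bound the retrieval phase. Evaluating any of $\everywhere_{\ra}$, $\somewhere_{\ra}$, $\ag^\mathsf{op}$, $\ct^\mathsf{op}$ at $(t,l)$ requires the set of locations whose distance from $l$ falls in the spatial interval $[d_1,d_2]$. Issuing this as a range query on the range tree associated with $l$ returns the matching locations in $O(\log n + |L|)$ time by the standard range-reporting bound: $O(\log n)$ to locate the two endpoints of $[d_1,d_2]$ inside the balanced tree, and $O(|L|)$ to enumerate the reported locations. For each reported location I would test the propositional constraint $\mathcal{L}(l') \sat \psi$ in $O(1)$ and discard those that fail, which keeps the neighborhood-construction cost at $O(\log n + |L|)$.

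Next I would bound the aggregation phase. Because the sub-formula values are computed bottom-up by the recursive monitor, for each $l' \in \nb^l$ the relevant scalar is already available in $O(1)$: the signal sample $\pi_x(\omega)[t,l']$ for $\ag^\mathsf{op}$, and the monitored value $\rho(\varphi,\omega,t,l')$ (resp.\ the Boolean satisfaction of $\varphi$) for $\ct^\mathsf{op}$ and for the derived $\everywhere_{\ra}$, $\somewhere_{\ra}$. Each operator $\op \in \{\max,\min,\mathrm{sum},\avg\}$ is then computed by a single linear scan over the $|L|$ values, i.e.\ $O(|L|)$; for the $\mathsf{sum}$ and $\mathsf{avg}$ counting variants, whose quantitative semantics uses the $k$th-smallest selector $\smallfunction(k,S)$, a linear-time selection procedure keeps this step at $O(|L|)$ as well. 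Adding the two phases gives $O(\log n + |L|) + O(|L|) = O(\log n + |L|)$, as claimed.

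The main obstacle is justifying that retrieval touches only $O(\log n + |L|)$ locations rather than scanning all $n$; this is exactly what the precomputed distance range trees buy us, and the argument hinges on the fixed-grid assumption together with the range-tree reporting bound of \cite{Lueker78}. A secondary subtlety worth stating explicitly is that the distance range query reports the locations in the band \emph{before} the $\psi$-filter, so the candidate count is the number of locations within the distance interval; since the lemma defines $L$ as precisely the locations at distance within $\ra$ from $l$, and since the $\psi$ test is $O(1)$ per candidate, the enumeration and filtering remain within $O(|L|)$.
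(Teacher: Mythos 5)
Your proposal is correct and follows essentially the same route as the paper's proof: a range-tree query (citing the same reporting bound of Lueker) retrieves the neighborhood in $O(\log n + |L|)$, and the fold for $\op$ costs $O(|L|)$ on top (the paper simply notes the aggregation can be interleaved with retrieval). Your version is more explicit on two points the paper glosses over --- the $O(1)$ per-candidate $\psi$-filter and the need for linear-time selection to handle the $k$th-smallest selector in the quantitative $\mathsf{sum}/\mathsf{avg}$ counting semantics --- but these are refinements, not a different argument.
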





\begin{theorem}
The time complexity of the SaSTL monitoring algorithm  is upper-bounded by $O(|\phi|\times T_{max}\times (log(n) + |L|_{max}))$
where $T_{max}$ is the largest number of samples of  the 
intervals considered in the temporal operators of $\phi$ 
and $|L|_{max}$ is the maximum number of locations defined 
by the spatial temporal operators of $\phi$. 
\label{th:timeAlg}
\end{theorem}






\noindent
\begin{algorithm}[t] 
\tablefontsize
\newcontent{
 \SetKwFunction{CS}{AggregateQ}

  \SetKwProg{Fn}{Function}{:}{}
  \Begin{
        \textbf{Real} v := 0; n := 0;
        
        \lIf{$op$ == "min"}{
             $v := \infty $
         }
        \lIf{$op$ == "max"}{
             $v := - \infty $
         }
     
      {$L^l_{\ra} := \mathsf{deScan}(l, G, \ra)$}
     
     \For {$l' \in L^l_{\ra}$}{
          
           \If{$\mathsf{op} \in \{$min, max, sum$\}$}{
                       $v$ := $\mathsf{op}(v,\pi_x(\omega)[t, l'])$;
                  }
                  \If{$\mathsf{op} == $"avg"}{
                      $v$ := $\mathsf{sum}(v,\pi_x(\omega)[t, l'])$;
                  }
                  $n := n+1$
     }
    \lIf{$n==0$}{\Return $\infty$}
     \lIf{$\mathsf{op}$ == "avg" $\land n \neq 0$}{
                \Return $v / n - c$
            }
    \lIf{$\mathsf{op}$ == "sum" $\land n \neq 0$}{
               \Return $(v - c) / n$
            }
    \lElse{\Return $v - c$}
     
  }

}
\caption{\newcontent{$\mathsf{AggregateQ}(x, op, \ra, \omega, t, l, G)$}}
\label{alg:AggregationQ}
\end{algorithm}


\begin{algorithm}[t]
\tablefontsize
\newcontent{
 \SetKwFunction{CS}{CountingNeighbours}
  \SetKwProg{Fn}{Function}{:}{}
   \SetKwFor{Case}{Case}{}{}
      \Begin{
              \textbf{Real} $n := 0$, \textbf{List} $s := Null$;
              
               {$L^l_{\ra} := \mathsf{deScan}(l, G, \ra)$}
               
             \For {$l' \in L^l_{\ra}$}{
             s.add($\mathsf{Monitor}(\varphi,\omega, t, l', G)$)
             
                  $n := n+1$
             }
             
\lIf{$n==0$}{\Return $\infty$}  
\Else{
\Switch{$\mathsf{op}$}{
\Case{$\max$}{\Return $s.\max()$}
\Case{$\min$}{\Return $s.\min()$}
\Case{$\mathsf{sum}$}{\Return $s.\max(\mathsf{round}(c))$}
\Case{$\mathsf{avg}$}{\Return $s.\max(\mathsf{round}(c \times n))$}
                   }
      }
      }
 }
\caption{\newcontent{$\mathsf{CountingNeighboursQ}(x, op, \ra, \omega, t, l, G)$}}
\label{alg:CountingQ}
\end{algorithm}

\vspace{-1cm}
\subsection{Performance Improvement of SaSTL Parsing}

To monitor a requirement, the first step is parsing the requirement to a set 
of sub formulas with their corresponding spatial-temporal ranges. 
Then, we calculate the results for the sub-formulas. The traditional parsing process of STL builds and calculates the syntax tree on the sequential order of the formula. It does not consider the complexity of each sub-formula. 
However, in many cases, especially with the PoIs specified in smart cities, checking the simpler propositional variable to quantify the spatial domain first can significantly reduce the number of temporal signals to check in a complicated formula. 
For example, the city abstracted graph in \figref{fig:eff}, the large nodes represent the locations of PoIs, among which the red ones represent the schools, and blue ones represent other PoIs. The small black nodes represent the locations of data sources (e.g. sensors). Assuming a requirement $\everywhere_{([0, +\infty), \mathsf{School})} \always_{[a,b]} (\agr^\mathsf{op}_{([0,d],\top])} \varphi \sim c)$ requires to aggregate and check $\varphi$ only nearby schools (i.e., the red circles), but it will actually check data sources of all nearby 12 nodes if one is following the traditional parsing algorithm. 
In New York City, there are about 2000 primary schools, but hundreds of thousands of PoIs in total. A very large amount of computing time would be wasted in this way.


\begin{algorithm}[h]
\tablefontsize
  \Case{$\varphi_1 \land \varphi_2$ }{
                        \Return Monitor($\varphi_1,\omega, t, l, G$) 
                        $\land$ Monitor($\varphi_2,\omega, t, l, G$); \\
                        \If{$\mathsf{cost}(\varphi_1, l, G) \leq \mathsf{cost}(\varphi_2, l, G)$}{      
                            \If {$\neg$ Monitor($\varphi_1,\omega, t, l, G$)}{
                                 \Return Monitor($\varphi_2,\omega, t, l, G$);
                            }
                            \Return True;
                        }
                        \If {$\neg$ Monitor($\varphi_2,\omega, t, l, G$)}{
                            \Return Monitor($\varphi_1,\omega, t, l, G$);
                        }
                        \Return True;
                    }
\caption{Satisfaction of $(\varphi_1 \land \varphi_2, \omega)$}
\label{alg:and}
\end{algorithm}

\begin{figure}
    \centering
    \includegraphics[width=5.6cm]{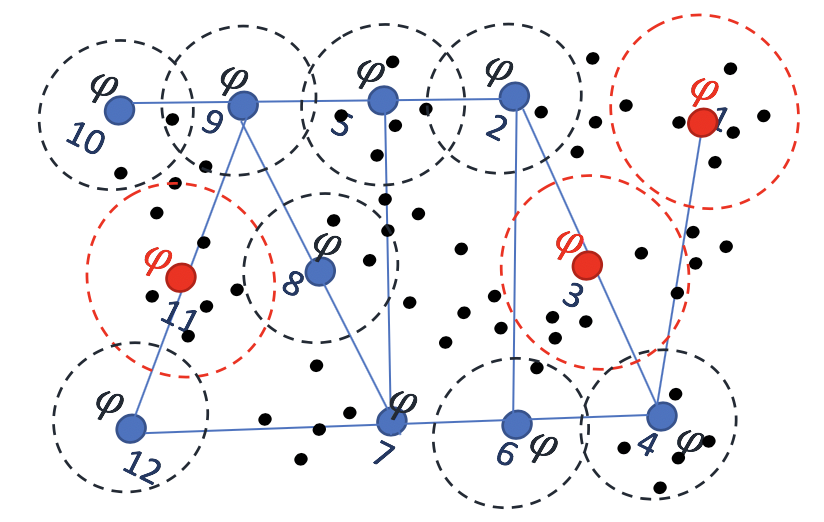} \\
    \vspace{-0.2cm}
    \captionof{figure}{An example of city abstracted graph. A requirement is $\everywhere_{([0, +\infty), \mathsf{School})} \always_{[a,b]} (\agr^\mathsf{op}_{([0,d], \top)} \varphi \sim c)$ (The large nodes represent the locations of PoIs, among which the red ones represent the schools, and blue ones represent other PoIs. The small black nodes represent the locations of data sources.)}\label{fig:eff}
\end{figure}

To deal with this problem,
we now introduce a monitoring cost function $\mathsf{cost}: \Phi \times L \times G_L \rightarrow \mathbb{R}^+ $, where $\Phi$ is the set of all the possible SaSTL formulas,
$L$ is the set of locations, $G_L$ is the set of all the possible undirected graphs with $L$ locations. The cost function for $\varphi$ is defined as:

\vspace{-0.3cm}
\begin{equation*}
\footnotesize
\begin{split}
    &\mathsf{cost}(\varphi, l, G) =   \\
        &\begin{cases}
        1 & \mathsf{if}~\varphi:=p \lor \varphi:=x\sim c \lor \varphi:=\mathsf{True} \\
         1+\mathsf{cost}(\varphi_1, l, G) & \mathsf{if}~\varphi:= \neg \varphi_1\\
         \mathsf{cost}(\varphi_1, l, G) + \mathsf{cost}(\varphi_2, l, G) & \mathsf{if}~\varphi := \varphi_1 * \varphi_2, *\in\{ \land, \until\}\\
         |L^l_{\ra}| &\mathsf{if}~ \varphi := \ag^\mathsf{op} x\sim c \\
         |L^l_{\ra}| \mathsf{cost}(\varphi_1, l, G)&\mathsf{if}~ \varphi := \ct^\mathsf{op} \varphi_1 \sim c\\
        \end{cases} 
\end{split}
\end{equation*}

Using the above function, the cost of each operation is calculated before ``switch $\varphi$'' (refer to \algref{alg:sastlQuanti}). 
The cost function measures how complex it is to monitor 
a particular SaSTL formula.  This can be used when the algorithm 
evaluates the $\land$ operator and it establishes the order 
in which the sub-formulas should be evaluated. The simpler 
sub-formula is the first to be monitored, while the more 
complex one is monitored only when the other sub-formula 
is satisfied. We update $\mathsf{monitor}(\varphi_1 \land \varphi_2,\omega)$ in \algref{alg:and}.
With this cost function, the time complexity of the monitoring algorithm is reduced to $O(|\phi|\times T_{max} \times (log(n)+|L'|_{max}))$, where $|L'|$ is the maximal number of locations that an operation is executed with the improved parsing method. The improvement is significant for city requirements, where $|L'|_{max} < 100 \times |L|_{max}$.

\subsection{Parallelization}
In traditional STL monitor algorithm, the signals are checked sequentially. For example, to see if the data streams from all locations satisfy $\everywhere_{\ra}\always_{[a,b]}\varphi$ in \figref{fig:eff}, usually, it would first check the signal from location 1 with $\always_{[a,b]}\varphi$, then location 2, and so on. At last, it calculates the result from all locations with $\everywhere_{\ra}$. In this example, checking all locations sequentially is the most time-consuming part, and it could reach over 100 locations in the field. 


To reduce the computing time, we parallelize the monitoring algorithm in the spatial domain. To briefly explain the idea: instead of calculating a sub-formula ($\always_{[a,b]}\varphi$) at all locations sequentially, we distribute the tasks of monitoring independent locations to different threads and check them in parallel. 
(\algref{alg:CountingPara} presents the parallel version of the spatial counting operator $\ct$.) To start with, all satisfied locations $l' \in L^l_{\ra}$ are added to a task pool (a queue). In the mapping process, each thread retrieves monitoring tasks (i.e., for $l_i, \always_{[a,b]}\varphi$) from the queue and executes them in parallel. All threads only execute one task at one time and is assigned a new one from the pool when it finishes the last one, until all tasks are executed. Each task obtains the satisfaction of $\mathsf{Monitor}(\varphi, \omega, t, l, G)$ function, and calculates the local result $v_i$ of operation $\mathsf{op}()$. The reduce step sums all the parallel results and calculates a final result  of  $\mathsf{op}()$. 


\begin{algorithm}
\begin{multicols}{2}
\tablefontsize
 \SetKwFunction{CS}{CountingNeighbours}
 \SetKwFunction{worker}{worker}
  \SetKwProg{Fn}{Function}{:}{}
  \Fn{\CS{$\varphi,op, \ra, \omega, t, l, G$}}{
      \Begin{
 
              paratasks = Queue();

            \For{$l' \in L^l_{\ra}$}{paratasks.add($l$)\;}
                
                results = Queue()\;
                
                \For{i in $1..\text{NumThreads}$}{
                Thread$_i$ $\leftarrow$ worker($\varphi,\omega, t, G$)\;
                }
                Wait()\;
                \Return op(results)\;
      }
}
  \Fn{\worker($\varphi,\omega, t, G$)}{
  \Begin{
               \textbf{Real} $v := 0$;
              
              \lIf{$op$ == "min"}
              {
                    $v := \infty $;
              }
              
              \lIf{$op$ == "max"}
              {
                  $v := - \infty $;
              }
\While{Num(tasks)>0}{
$l$ = paratasks.pop()\;
moni = Monitor($\varphi, \omega, t, l, G$)\;
v = op(v, moni)\;
}
  results.add(v)
  }
}
\end{multicols}            
\caption{Parallelization of Counting of $(x, op, \ra, \omega, t, l, G)$}
\label{alg:CountingPara}
\vspace{0.3cm}
\end{algorithm}

\begin{lemma}
The time complexity of the parallelized algorithm Monitor($\phi$, $\omega$) is upper bounded by $O({|\phi|}T_{max}(log(n)+\frac{|L|_{max}}{P}))$ when distributed to $P$ threads.
\label{lemma:parallel}
\end{lemma}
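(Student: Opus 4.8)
The plan is to reuse the product structure of the sequential bound from \thmref{th:timeAlg} and show that parallelization shrinks exactly one of its three factors. Recall that the sequential complexity $O(|\phi|\times T_{max}\times (log(n)+|L|_{max}))$ arises by charging, to each of the $|\phi|$ syntax-tree nodes, the cost of evaluating that operator at a single location and time, then multiplying by the $T_{max}$ temporal samples. The only node type whose per-evaluation cost depends on $|L|$ is a spatial operator ($\everywhere_\ra,\somewhere_\ra,\ag^\mathsf{op},\ct^\mathsf{op}$), which by \lemref{lemma:spatial} costs $O(log(n)+|L|)$. The first step is therefore to split this bound into its two constituents: the $log(n)$ term, which is the one-shot range-tree neighbor search $\mathsf{deScan}(l,G,\ra)$ that returns $L^l_\ra$, and the $|L|$ term, which is the subsequent loop that visits each qualifying neighbor $l'\in L^l_\ra$ to read (for $\ag^\mathsf{op}$) or recursively monitor (for $\ct^\mathsf{op}$) and fold its value into the running aggregate $\mathsf{op}$.

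The heart of the argument is that only this neighbor loop is parallelized. Following \algref{alg:CountingPara}, I would model the loop as $|L|$ mutually independent tasks placed in the queue \texttt{paratasks} and dispatched to $P$ worker threads, each of which maintains a private partial aggregate $v_i$. Since the tasks touch disjoint locations and write only to thread-local state, a standard work-partitioning bound gives a map phase running in time $O(\lceil |L|/P\rceil)=O(log(n)+|L|/P)$ once the $log(n)$ search (performed once, sequentially, before the dispatch) is folded back in. Thus the per-spatial-node cost drops from $O(log(n)+|L|)$ to $O(log(n)+|L|/P)$, and substituting this into the unchanged product of \thmref{th:timeAlg}, with $|L|_{max}$ the worst-case neighbor count, yields the claimed $O(|\phi|\,T_{max}\,(log(n)+\tfrac{|L|_{max}}{P}))$.

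The step I expect to be the main obstacle is controlling the \emph{reduce} phase and justifying that it is dominated. Combining the $P$ partial results into the final $\mathsf{op}$ value takes $O(P)$ time sequentially (or $O(\log P)$ with a tree reduction), and this is absorbed into $O(|L|/P)$ only under the natural regime $P^2\le |L|_{max}$ (equivalently $P\le\sqrt{|L|_{max}}$), which I would state as a standing assumption, or else by invoking the associativity of $\mathsf{op}$ so that the reduction is $O(\log P)$ and hence within the bound. A genuinely delicate point here is that the clean map-reduce decomposition relies on $\mathsf{op}\in\{\max,\min,\mathsf{sum},\mathsf{avg}\}$ being an associative combine over partial results: this holds directly for $\max$/$\min$, and for $\mathsf{sum}$/$\mathsf{avg}$ in the Boolean counting setting each thread simply contributes its local count of satisfied locations, which sum additively. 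I would flag that the quantitative $\smallfunction(k,\cdot)$ order-statistic variant does not reduce by a simple associative fold, so the argument as stated covers the associative-combine cases and the reduction cost is therefore kept to $O(P)$; the $log(n)$ neighbor search, being independent of $P$, passes through unchanged. Collecting these pieces gives the stated upper bound.
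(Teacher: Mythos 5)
Your proposal is essentially correct, and it supplies more than the paper itself does: the paper states Lemma~\ref{lemma:parallel} with no proof at all, so there is no official argument to compare against. Your route is the natural one implied by the surrounding text --- keep the per-node, per-sample accounting of Theorem~\ref{th:timeAlg}, note via \lemref{lemma:spatial} that only the $|L|$ term of a spatial node's $O(\log(n)+|L|)$ cost involves the neighbor set, and observe that \algref{alg:CountingPara} turns that neighbor loop into $|L|$ independent tasks spread over $P$ threads, giving $O(\log(n)+|L|/P)$ per evaluation and hence the stated product bound. Your two caveats are genuine contributions rather than hedging: the paper silently ignores the $O(P)$ (or $O(\log P)$ tree-reduction) cost of combining the per-thread partial aggregates, so some assumption like $P\le\sqrt{|L|_{max}}$ or an associative $O(\log P)$ combine is indeed needed for the bound to hold literally; and your associativity proviso correctly limits the clean map-reduce argument to the Boolean monitor and the $\max/\min$ quantitative cases, since the order-statistic semantics $\smallfunction(\lceil c\rceil,\cdot)$ for $\mathsf{sum}/\mathsf{avg}$ is not computed by the simple fold in \algref{alg:CountingPara}.

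One imprecision is worth repairing. You treat the neighbor search as a one-shot \emph{sequential} $O(\log(n))$ step, but a range-tree query that \emph{reports} the set $L^l_{\ra}$ costs $O(\log(n)+|L|)$, and the main thread's enqueueing of $|L|$ tasks in \algref{alg:CountingPara} is likewise $\Omega(|L|)$ sequential work; taken literally, this sequential prefix caps the per-node cost at $O(\log(n)+|L|)$ and voids the claimed speedup. This is precisely why the paper routes the location search through the parallel scan primitive $\mathsf{deScan}$. Your proof should therefore either assume the retrieval and task dispatch are themselves parallelized (e.g., by static index partitioning across threads, avoiding a centralized queue) or state this as an explicit hypothesis alongside your reduce-phase assumption.
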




In general, the parallel monitor on the spatial domain reduces the computational time significantly. It is very helpful to support runtime monitoring and decision making, especially for a large number of requirements to be monitored in a short time. 
In practice, the computing time also depends on the complexity of temporal and spatial domains as well as the amount of data to be monitored. 
A comprehensive experimental analysis of the time complexity is presented in \sectref{sect:eva}.


\begin{figure*}[htbp]
    \centering
    \includegraphics[width = \textwidth]{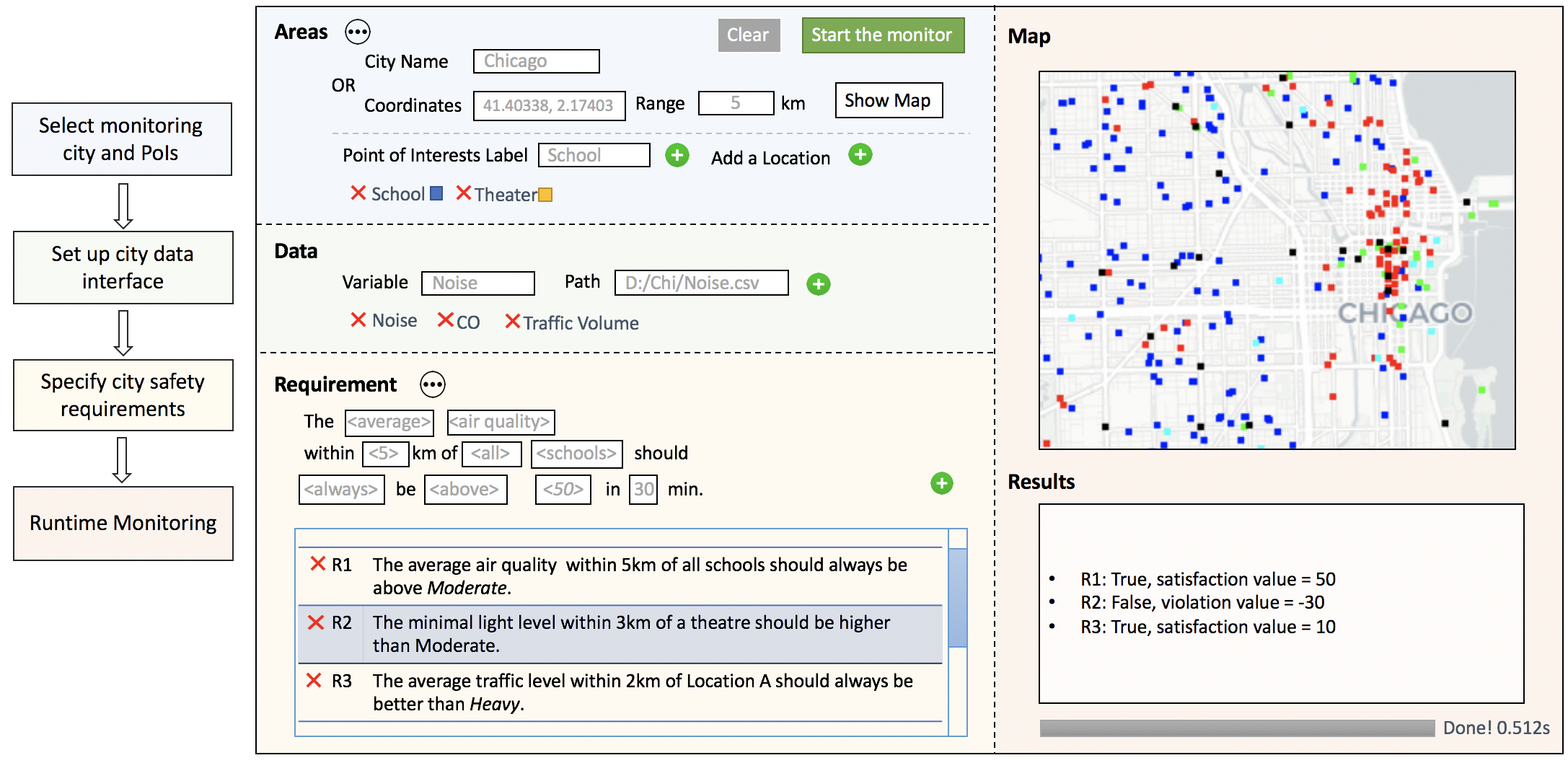}
    \caption{\newcontent{Interface of the SaSTL monitoring tool}}
    \label{fig:dash}
\end{figure*}
\newcontent{\section{Tool for the SaSTL Monitor}
\label{sect:tool}

We develop a user-friendly prototype tool for the SaSTL monitor that can support decision making of different stakeholders in smart cities.
The interface and flowchart of the tool are shown in \figref{fig:dash}. 
The tool allows users (e.g., city decision maker, citizens) without any formal method background to check the city performance (data) with their own requirements easily in four steps.

\textit{{Step 1:} selecting the monitoring city and PoI.} 
To start with, users select the areas (such as a city, or a particular area of the city) to monitor, then choose the important labels that a requirement is involved with, such as, schools, parks, theaters, etc. 
Once selected, the important points of interest (PoIs) are shown on the map. This helps users define and verify the monitoring locations. If a location or label is not included, users are also able to add them with their GPS coordinates.  
The map displays the locations of the specified labels and sensors. Users can enlarge the map to check the distribution of sensors and PoIs and revise the requirements accordingly. 

\textit{Step 2: setting up the city data interface. }
The data of the city states collected from sensors across temporal and spatial domains are introduced to the monitor in the Data section. For the offline monitoring, users can specify the data location of each variable on the computer. For runtime monitoring, the sensing data continuously come into the computer, the data interface of which can be set up in this section. 

\textit{Step 3: specifying the city safety requirements.}
As the next important step, users specify all requirements in the requirement section. Users first select the template and then choose/fill in the essential part using the structured template language. To be noted, the entities and spatial ranges correspond to the available data variables and PoIs inputs from the areas and data sections.

We define a series of templates using structured language learning from the existing city requirements, as shown in \figref{fig:templates}. 
The goal of these templates is to help and inspire users to specify requirements precisely.  
These templates are adequate to represent
all the example requirements given in \tabref{tab:reqexamples} as well as the total set of 1,000 quantitatively-defined requirements. 
We define the templates in a recursive way. 
T is a template, and T1 and T2 are instances of T.
The elements in T are optional, i.e.  < > can be defined as blank, indicating this element is not applicable or default in this requirement. For example, an environmental requirement is written as, ``The <average> <air quality> within <1> mile of all <parks> should <always> be <above> <good>." The duration is interpreted as always (default) and there is no condition element. 
To convert a structured requirement to SaSTL, we extract the pre-defined key elements and translate them to the SaSTL formula following the rules. Meanwhile, users are also able to use the advanced features to input the city requirements in the format of the SaSTL formal formulas directly.  

\begin{figure}[htbp]
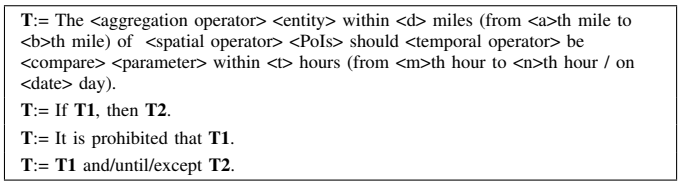

\tablefontsize
\renewcommand{\arraystretch}{1.3}
\newcontent{
	\begin{tabular}{|L{8.5cm}|}
		\hline
    \textbf{T}:= The {<aggregation operator>} <entity> within {<d> miles} ({from <a>th mile to <b>th} mile) of { <spatial operator> <PoIs>} should {<temporal operator>} be {<compare> <parameter>} {within <t> hours} ({from <m>th hour to <n>th hour} / {on <date> day}). \\ 
    \textbf{T}:= {If} \textbf{T1}, then \textbf{T2}. \\ 
    \textbf{T}:= {It is prohibited} that \textbf{T1}. \\
   \textbf{T}:= \textbf{T1} {and/until/except} \textbf{T2}. \\
		\hline
	\end{tabular}}
	\caption{\newcontent{Templates to specify city requirements}}
    \label{fig:templates}
    \vspace{-1.5em}
\end{figure}

\textit{Step 4: runtime monitoring.} 
With all the data and requirements well defined, users can start the monitor in order to check if the incoming data from the smart city satisfies the requirements. The results are displayed with a Boolean value indicating if the requirement is satisfied and a robustness value indicating how much the requirement is satisfied or violated. 
In addition, the map also displays the monitor results visually. Two examples are shown in \figref{fig:tool_re}. The first one is monitoring an air quality requirements of high schools in Chicago, and the second one is monitoring a traffic requirement in New York City. The green circle represents the location satisfied the requirement and the red circle represents the location violates the requirement; the size of the circle represents the degree of satisfaction or violation. Users can zoom in and out the map to focus on a specific area or check the overall performance as needed (See \figref{fig:tool_re} (2)). 

\begin{figure}[htbp]
    \centering
    \includegraphics[width = 0.4\textwidth]{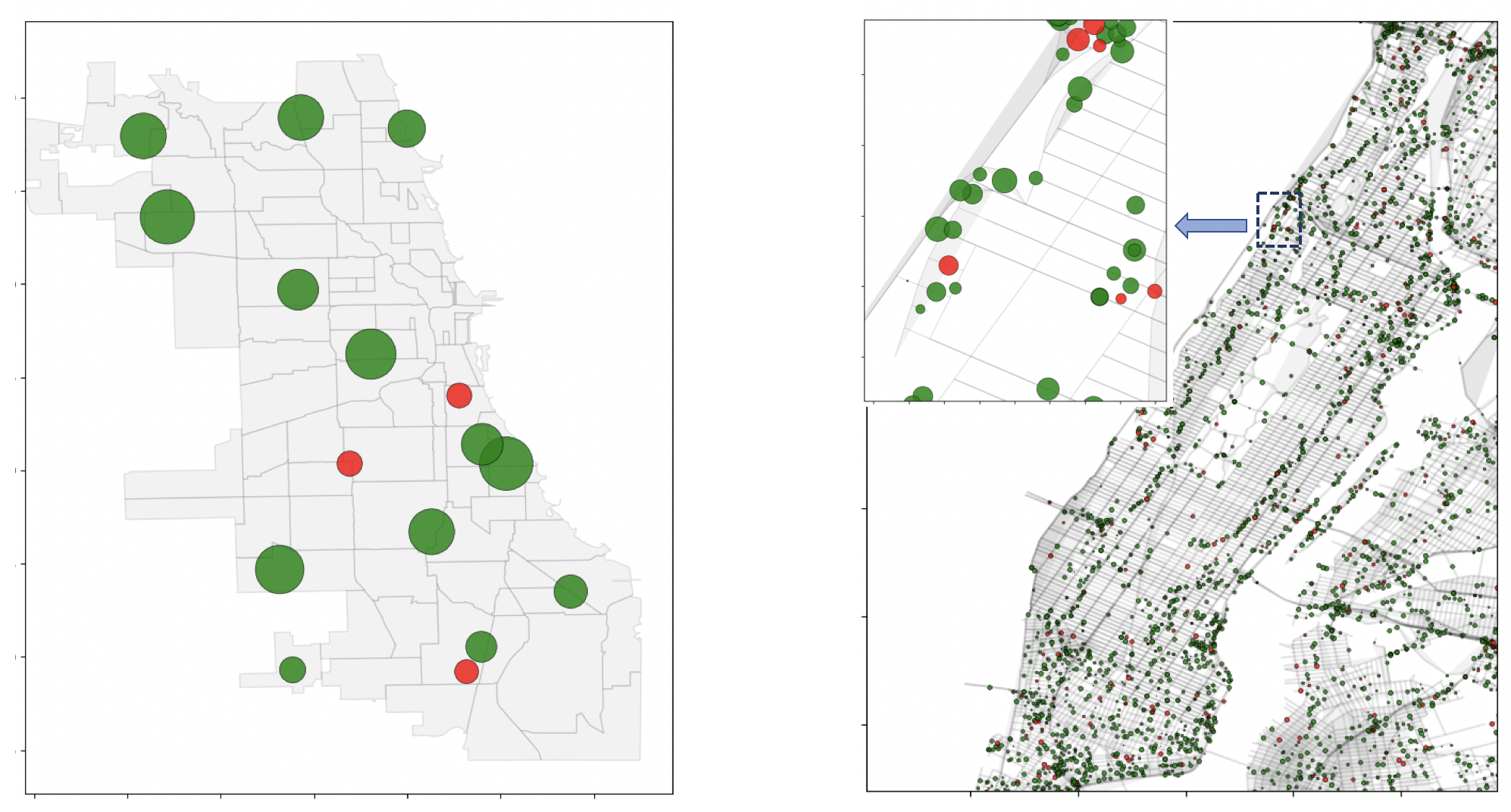}
    
    \scriptsize{
(1) Air Quality in Chicago    \hspace{1.2cm}      (2) Traffic  in New York City}
    \caption{\newcontent{Display of the Monitoring Results on the Maps (The green circle represents the location satisfied the requirement and the red circle represents the location violates the requirement; the size of the circle represents the degree of satisfaction or violation.)}}
    \label{fig:tool_re}
\end{figure}

In summary, we defined templates helping users to specify requirements to the SaSTL formal formulae. We believe these templates can not only help users to convert the requirement from English to formal formulae, they are also helpful for users to write the requirements much more specifically and precisely. The templates defined in this paper are not sufficient to cover all the city requirements, especially the new requirements coming with more and more smart services being developed. However, the approach that using structured language to specify requirements proposed in this paper is general and effective. Also, the templates are easily extended to adapt to new requirements. 

We envision this tool can be used by different stakeholders in smart cities, including but not limited to, 

\textit{City managers and decision makers}: In the city operating center, with city data collected in real time, the Tool is able to help city managers and decision makers to monitor the data at runtime. It also helps the city center to detect conflicts, and provide support for decision makers by showing the trade-offs of satisfaction degrees among potential solutions.

\textit{City planners}: City planners, either from the government to make long-term policies or from a company to make a short-term event plan, they are able to use the Tool to verify the past city data with their requirements and make plans to prevent the violations. 

\textit{Service designers}: Smart services are designed by different stakeholders including the government, companies and private parties, they are not aware of all the other services. However, with the monitor, they can test the influence of their services on the city and adjust the services to better serve the city. 

\textit{Everyday citizens}: The tool can also provide a service to the everyday citizens. Citizens without any technical background are able to specify their own requirements and check them with the city data to find out in which areas of the city and period of the day their requirements are satisfied, and make plans about their daily life. For example, a citizen can specify an environmental requirement with his/her preferred air quality index and traffic conditions, and check the city data with the requirements and make up travelling agenda accordingly.  
}
\section{Evaluation}
\label{sect:eva}

We evaluate the SaSTL monitor by applying it to three big city application scenarios, \textit{New York}, \textit{Chicago}, and \textit{Aarhus}. 
\revision{The experiments are evaluated on a server machine with 20 CPUs, each core is 2.2GHz, and 4 Nvidia GeForce RTX 2080Ti GPUs. The operating system is Centos 7.}

\subsection{Runtime Monitoring of Real-Time Requirements in Chicago}

\subsubsection{Introduction}
We apply SaSTL to monitor the real-time requirements in Chicago. The framework is the same as shown in \figref{fig:overview}, where we first formalize the city requirements to SaSTL formulas and then monitor the city states with the formalized requirements. \revision{Chicago is collecting and publishing city environment data (e.g., CO, NO, O3, visible light) every day since January, 2017~\cite{arrayofthings}. In our evaluation, we emulate the Chicago data as it arrives in real time, i.e. assuming the city was operating with our SaSTL monitor. Specifically, we monitor data from 118 locations between January, 2017 and May, 2019. In addition, we incorporate the Chicago crime rate data published by the city of Chicago~\cite{chicagocrime}. The sampling rates of sensors vary by locations and variables (e.g., CO is updated every few seconds, and the crime rate map is updated by events), so we normalize the data frequency as one minute. } 
Then we specify 80 safety and performance requirements that are generated from the real requirements, and apply the SaSTL to monitor the data every 3 hours continuously to identify the requirement violations.

\noindent
\begin{figure}
    \centering
    \includegraphics[width=.5\textwidth]{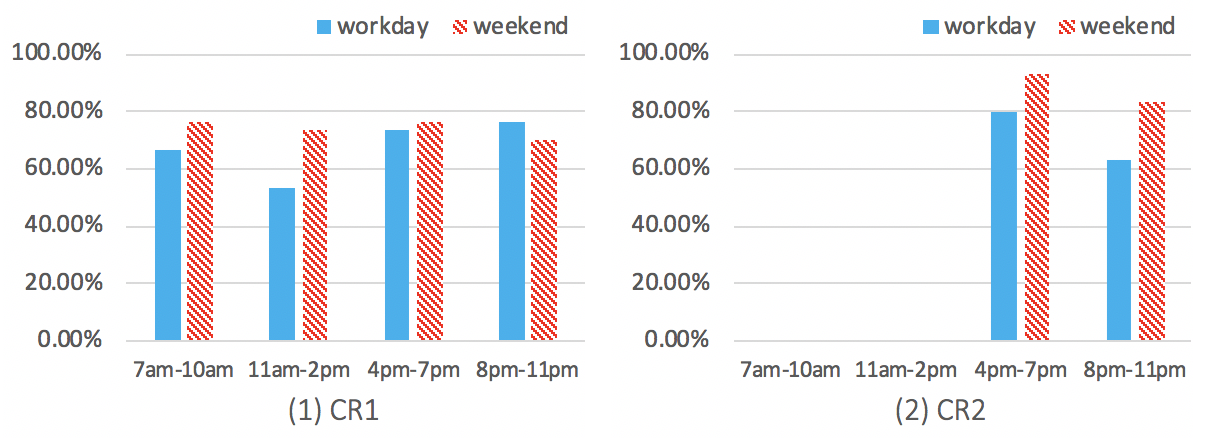}
    \captionof{figure}{Requirement Satisfaction Rate during Different Time Periods in Chicago}\label{fig:chicago_case}
\end{figure}

\begin{figure}
    \centering
    \includegraphics[width=.45\textwidth]{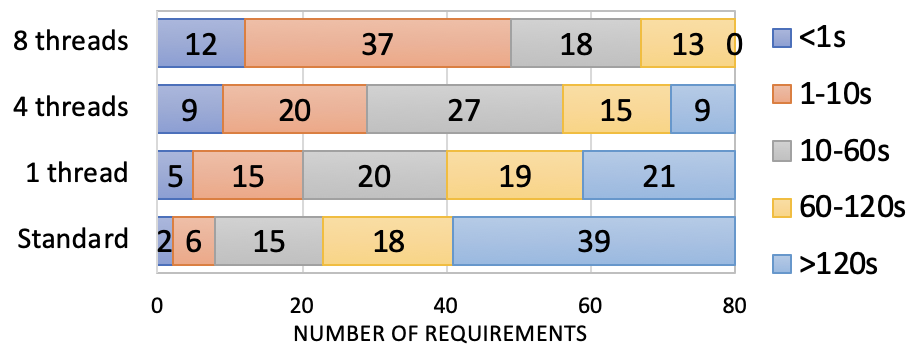}
    \captionof{figure}{Number of Requirements Checked on Different Computing Time}\label{fig:chicago80}
\vspace{-0.5cm}
\end{figure}


\subsubsection{Chicago Performance}

Valuable information is identified from the monitor results of different periods during a day.
We randomly select 30 days of weekdays and 30 days of weekends. We divide the daytime of a day into 4 time periods and 3 hours per time period. 
We calculate the percentage of satisfaction (i.e., number of satisfied requirement days divides 30 days) for each time period, respectively. The results of two example requirements CR1 and CR2 are shown in \figref{fig:chicago_case}. 
CR1 specifies ``The average air quality  within 5km of all schools should always be above \textit{Moderate} in the next 3 hours.'' and is formalized as $\everywhere_{([0,+\infty),  {\mathsf{School}})}\always_{[0,3]}(\agr_{([0,5],\top)}^{\avg} x_{\mathsf{air}} > \mathsf{Moderate})$. CR2 specifies ``For the blocks with a high crime rate, the average light level within 3 km should always be \textit{High}'' and is formalized as $\everywhere_{([0,+\infty),\top)}\always_{[0,3]}  ( x_{\mathsf{Crime}} = \mathsf{High} \rightarrow \agr_{([0,3],\top)}^{\avg} x_{\mathsf{Light}} >= \mathsf{High})$. 



The SaSTL monitor results can be potentially used by different stakeholders. 

First, \textit{with proper requirements defined, the city decision makers are able to identify the real problems and take actions to resolve or even avoid the violations in time}. For example, from the two example requirements in \figref{fig:chicago_case}, we could see over 20\% of the time the requirements are missed everyday. 
Based on the monitoring results of requirement CR1, decision makers can take actions to redirect the traffic near schools and parks to improve the air quality. 
Another example of requirement CR2, the satisfaction is much higher (up to 33\% higher in CR2, 8pm - 11pm) over weekends than workdays.  There are more people and vehicles on the street on weekends, which as a result also increases the lighted areas. However, as shown in the figure, the city lighting in the areas with high crime rate is only 60\%. An outcome of this result for city managers is that they should pay attention to the illumination of workdays or the areas without enough light to enhance public safety.  

Second, \textit{it gives the citizens the ability to learn the city conditions and map that to their own requirements}. They can make decisions on their daily living, such as the good time to visit a park. For example, requirement CR1, 11am - 2pm has the lowest satisfaction rate of the day. 
The instantaneous air quality seems to be fine during rush hour, but it has an accumulative result that affects citizens' (especially students and elderly people) health. 
A potential suggestion for citizens who visit or exercise in the park is to avoid 11am - 2pm.

\subsubsection{Algorithm Performance}


\begin{table*}[t]
\caption{Safety and Performance Requirements for New York City}
\label{table:req_ny}
\tablefontsize
\centering
\begin{tabular}{|L{0.5cm}|m{8.3cm}|L{8.cm}|}
\hline
 & \textbf{Requirement} & \textbf{SaSTL}                                                                                                           \\\hline
\textbf{NYR1} & The average noise level in the school area (within 1km) should always be less than 50dB in the next 30min.  &  
$\boxbox_{ ([0, +\infty), \mathsf{School})}\always_{[0,30]} (\agr_{([0,1], \top)}^{\avg} x_\mathsf{Noise} < 50) $ \\\hline
\textbf{NYR2} & If an accident happens, at least one of the nearby hospitals  (within 5km), its traffic condition within 2km should not reach the level of congestion in the next 60 min. &
$\everywhere_{([0,+\infty), \top)}(\mathsf{Accident}\rightarrow 
 \mathcal{C}_{([0,5],\mathsf{Hospital})}(\always_{[0,60]}(\agr_{([0,2], \top)}^{\avg} x<\mathsf{Congestion}))>0)$
\\\hline
\textbf{NYR3} & If there is an event, the max number of pedestrians waiting at an intersection should not be greater than 50 for more than 10 minutes.     & 
$\everywhere_{([0, +\infty),\top)}(\mathsf{Event} \rightarrow  \always_{[0,10]} (\agr_{([0,1],\top)}^{\max} x_\mathsf{ped} < 50))$
\\\hline
\textbf{NYR4} & At least 90\% of the streets, the PMx emission should not exceed \textit{Moderate} in 60 min.                  &  $ \mathcal{C}_{([0,+\infty),\top)}^\avg(\always_{[0,60]} (\agr_{([0,1],\top)}^{\max} x_\mathsf{PMx} < \mathsf{Moderate})) > 0.9$\\\hline
\textbf{NYR5} & If an accident happens, it should be solved within 60 min, and before that nearby (500 m) traffic should be above moderate on average and safe in worst case.    &    
$\boxbox_{([0, +\infty),\top)}(\mathsf{Accident}\rightarrow  (\agr_{([0,500],\top)}^{\avg} x_\mathsf{traffic} <\mathsf{Moderate} \land \agr_{([0,500],\top)}^{\max} x_\mathsf{traffic} < \mathsf{Safe}) \mathcal{U}_{[0,60]} \neg \mathsf{Accident})$
\\\hline                                    
\end{tabular}
\vspace{-0em}
\end{table*}


We count the average monitoring time taken by each requirement when monitoring for 3-hour data. Then, we divide the computing time into 5 categories, i.e., less than 1 second, 1 to 10 seconds, 10 to 60 seconds, 60 to 120 seconds, and longer than 120 seconds, and count the number of requirements under each category under the conditions of standard parsing, improved parsing with single thread, 4 threads, and 8 threads.  The results are shown in \figref{fig:chicago80}. Comparing the 1st (standard parsing) and 4th (8 threads) bar, without the improved monitoring algorithms,  for about 50\% of the requirements, each one takes more than 2 minutes to execute.  
The total time of monitoring all 80 requirements is about 2 hours, which means that the city decision maker can only take actions to resolve the violation at earliest 5 hours later. 
However, with the improved monitoring algorithms, for 49 out of 80 requirements, each one of them is executed within 60 seconds, and each one of the rest requirements is executed within 120 seconds.
The total execution time is reduced to 30 minutes, which is a reasonable time to handle as many as 80 requirements. More importantly, it illustrates the effectiveness of the parsing function and parallelization methods. Even if there are more requirements to be monitored in a real city, it is doable with our approach by increasing the number of processors.

\subsection{Runtime Conflict Detection and Resolution in Simulated New York City}

\subsubsection{Introduction}
\revision{
The framework of runtime conflict detection and resolution ~\cite{ma2016detection, ma2018cityresolver} considers a scenario where smart services send action requests to the city center, and where a simulator predicts how the requested actions change the current city states over a finite future horizon of time. Then it checks the predicted states against city requirements. If the requirements are satisfied, the requested actions will be approved to execute in the city. If there exists a requirement violation within the future horizon, a conflict is detected. CityResolver will be applied to resolve the conflicts. Details of the resolution are not the main part of this paper, please refer to CityResolver~\cite{ma2018cityresolver}. Note that with the conflicts detected and resolved, the city's future states will be affected. In this paper, we apply the SaSTL monitor to specify requirements with spatial aggregation and check the \textit{predicted spatial-temporal data} with the SaSTL formulas.

We set up a smart city simulation of New York City using the Simulation of Urban MObility (SUMO) \cite{sumo} with the traffic pattern (vehicle in-coming rate of key streets) from real city data \cite{nycopendata}, on top of which, we implement 10 services (S1: Traffic Service, S2: Emergency Service, S3: Accident Service, S4: Infrastructure Service, S5: Pedestrian Service, S6: Air Pollution Control Service, S7: PM2.5/PM10 Service, S8: Parking Service, S9: Noise Control Service, and S10: Event Service). 
The real-time states (including CO, NO, O3, PMx, Noise, Traffic, Pedestrian Number, Signal Lights, Emergency Vehicles, and Accident number) from the domains of environment, transportation, events and emergencies are obtained from about 10,000 simulated nodes. Then, we apply the STL Monitor as the \textit{baseline} to compare the capability of requirement specification and the ability to improve city performance. We simulate the city running for 30 days with sampling rate as 10 seconds in two control sets, one without any monitor and one with the SaSTL monitor. For the first set (no monitor), there is no requirement monitor implemented. For the second one (SaSTL monitor), five examples of different types of real-time requirements and their formalized SaSTL formulas are given in \tabref{table:req_ny}. 
}

\subsubsection{NY City Performance}

The results are shown in \tabref{tab:performance}. We measure the city performance from the domains of transportation, environment, emergency and public safety using the following metrics, the total number of violations detected (i.e.,  \revision{the total number of safety requirements violated during the whole simulation time}), the average CO (mg) emission per street, the average noise (dB) level per street, the emergency vehicles waiting time per vehicle per intersection, the average number and waiting time of vehicles waiting in an intersection per street, and the average pedestrian waiting time per intersection.

\begin{table}[t]
\caption{Comparison of the City Performance with the STL Monitor and the SaSTL Monitor}
\label{tab:performance}
\tablefontsize
\centering
\begin{tabular}{|l|r|r|}
\hline
                            & {No Monitor }   & \textbf{SaSTL Monitor} \\ \hline
\textbf{Number of Violation}         & Unknown       & \textbf{173}           \\ \hline
\textbf{Air Quality Index}                     & 67.91        & \textbf{40.18}         \\ \hline
\textbf{Noise (db) }                 & 73.32       & \textbf{41.42}         \\ \hline
\textbf{Emergency Waiting Time (s)} & 20.32          & \textbf{11.88}        \\ \hline
\textbf{Vehicle Waiting Number }    & 22.7           & \textbf{12.6}          \\ \hline
\textbf{Pedestrian Waiting Time (s)}& 190.2         & \textbf{61.1}          \\ \hline
\textbf{Vehicle Waiting Time (s)}  & 112.12         & \textbf{59.22}         \\ \hline
\end{tabular}
\vspace{-0.5em}
\end{table}

We make some observations by comparing and analyzing the monitoring results. 

First, \textit{the SaSTL monitor obtains a better city performance with fewer number of violations detected under the same scenario.}
As shown in \tabref{tab:performance}, on average, the framework of conflict detection and resolution with the SaSTL monitor improves the air quality by 40.8\%, and improves the pedestrian waiting time by 47.2\% comparing to the one without a monitor.

Second, \textit{the SaSTL monitor reveals the real city issues, helps refine the safety requirements in real time and supports improving the design of smart services. }
We also compare the number of violations on each requirement. The results (\figref{fig:pie} (1)) help the city managers to measure city's performance with smart services for different aspects, and also help policymakers to see if the requirements are too strict to be satisfied by the city and make a more realistic requirement if necessary. For example, in our 30 days simulation, apparently, NYR4 on air pollution is the one requirement that is violated by most of the smart services. 
Similarly, \figref{fig:pie} (2) shows the number of violations caused by different smart services. Most of the violations are caused by S1, S5, S6, S7, and S10. The five major services in total cause 71.3\% of the violations. City service developers can also learn from these statistics to adjust the requested actions, the inner logic and parameters of the functions of the services, so that they can design a more compatible service with more acceptable actions in the city.

 \subsubsection{Algorithm Performance}
We compare the average computing time for each requirement under four conditions, (1) using the standard parsing algorithm without the cost function, (2) improved parsing algorithm with a single thread, (3) improved parsing algorithm with spatial parallelization using 4 threads and (4) using 8 threads. The results are shown in \tabref{tab:computingtime}.

\begin{table}[t]
\tablefontsize
\caption{Computing time of requirements with standard parsing function, with improved parsing functions and different number of threads}
\label{tab:computingtime}
\begin{tabular}{|l|r| r| r| r|}
\hline
 & \textbf{Standard Parsing (s)} & \textbf{1 thread (s)} & \textbf{4 threads (s)} & \textbf{8 threads (s)} \\ \hline
\textbf{NYR1}         & 2102.13          & 140.29   & 50.31     & \textbf{26.12}     \\ \hline
\textbf{NYR2}         & 55.2             & \textbf{0.837}    & 1.023     & 0.912     \\ \hline
\textbf{NYR3}         & 69.22            & 22.25    & 7.54      & \textbf{4.822}     \\ \hline
\textbf{NYR4}         & 390.19           & 390.19   & 100.23    & \textbf{53.32}     \\ \hline
\textbf{NYR5}         & 61.76            & 61.76    & 20.25     & \textbf{15.68 }    \\ \hline
\textbf{Total}      & 2678.5           & 615.32  & 179.35   & \textbf{100.85}   \\ \hline
\end{tabular}
\vspace{-0.3em}
\end{table}

\begin{figure}[t]
    \centering
    \includegraphics[width=8cm]{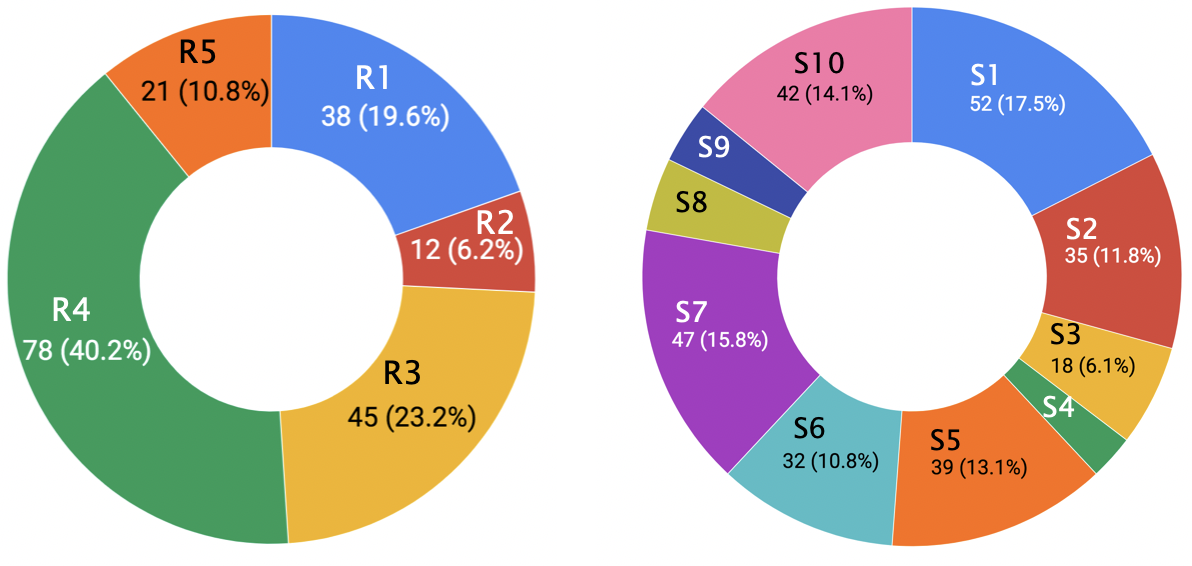}\\
    \scriptsize{(1) Requirements \hspace{1.4cm}(2) Smart Services}
    \caption{Distributions of the violations over requirements and smart services}
    \label{fig:pie}
    \vspace{-0.8em}
\end{figure}

First, the improved parsing algorithm reduces the computing time significantly for the requirement specified on PoIs, especially for NYR1 that computing time reduces from 2102.13 seconds to 140.29 seconds (about 15 times). 
Second, the parallelization over spatial operator further reduces the computing time in most of the cases. For example, for NYR1, the computing time is reduced to 26.12 seconds with 8 threads while 140.29 seconds with single thread (about 5 times). When the amount of data is very small (NYR2), the parallelization time is similar to the single thread time, but still much efficient than the standard parsing.  

\begin{figure}
    \centering
    \includegraphics[width=.45\textwidth]{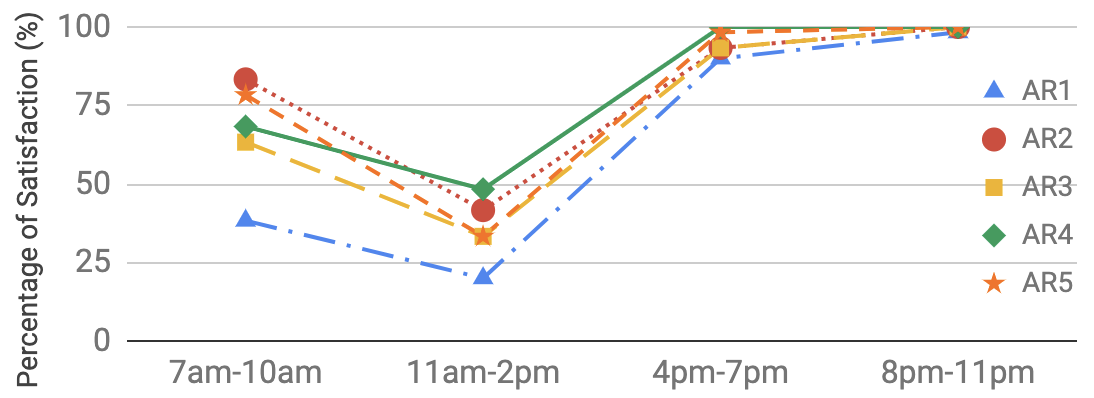}
    \captionof{figure}{\newcontent{Comparisons of Satisfaction Rate on AR1 to AR5}}\label{fig:aarhus_data}
\end{figure}


The results demonstrate the effectiveness and importance of the efficient monitoring algorithms. In the table, the total time of monitoring 5 requirements is reduced from 2678.5 seconds to 100.85 seconds. In the real world, when multiple requirements are monitored simultaneously, the improvement is extremely important for real-time monitoring.

\newcontent{
\subsection{Evaluation for Aarhus}


\subsubsection{Introduction}

\revision{In this case study, we monitor the past data of events and states from Aarhus to show how the SaSTL monitor helps to understand the effects caused by events and therefore aids in decision making for city events. We utilize 60 days (August to September 2014) of Aarhus city data collected simultaneously across the domains of transportation (e.g., traffic volume, parking), events (e.g., cultural events and library events) and the environment (generated pollution and weather).  
All the data were collected from 449 observation points and published by CityPulse~\cite{aarhus}. Data was collected with different sampling rates (e.g., the traffic data were aggregated by 5 minutes and events data were recorded by the event time), thus for the monitoring purpose, we normalize the data frequency as 5 minutes. }
Five safety and performance requirements and their corresponding SaSTL statements are presented with a high demand for aggregations specified for Aarhus in \tabref{table:req_aarhus}. 
Basically, AR1 to AR5 specify that when there is an event, there is a  different level of safety requirements on the traffic under different circumstances. For example, AR2 focuses on the areas nearby an event, AR3 focuses on the safety of school with an event, and R4 considers the effects from extreme weather conditions. AR5 has a big picture on all schools across the city when a large cultural event is happening. 



\begin{table*}[htbp]
\newcontent{
\caption{\newcontent{Safety and Performance Requirements for Aarhus}}
\label{table:req_aarhus}
\tablefontsize
\begin{tabular}{|L{0.5cm}|m{8.3cm}|L{8.cm}|}
\hline
 & \textbf{Requirement} & \textbf{SaSTL}                                                                                                                            \\\hline
\textbf{AR1 }         & If there is an event, the traffic level nearby  should always be better than \textit{Moderate}.          &  $\mathsf{Event} \rightarrow \ew\always_{[0,3]} x_\text{traffic} > \mathsf{Moderate}$\\\hline
\textbf{AR2 }         & If there is an event, the average traffic level nearby should always be  better than \textit{Moderate} and the maximum traffic level nearby should be better than \textit{Safe}.                                                                   & $\mathsf{Event} \rightarrow \ew\always_{[0,3]} (\agr_{([0,1],\top)}^{\avg} x_\mathsf{traffic}> \mathsf{Moderate}  \land \agr_{([0,1],\top)}^{\max} x_\mathsf{traffic}> \mathsf{Safe})$  \\\hline
\textbf{AR3  }        & If there is an event, the average traffic near the school (3km) should always be better than \textit{Moderate} and the maximum traffic level should be better than \textit{Heavy}.                                                               & $\mathsf{Event} \rightarrow \ew\always_{[0,3]} (\agr_{([0,1],\top)}^{\avg} x_\mathsf{traffic}> \mathsf{Moderate}  \land \agr_{([0,1],\top)}^{\max} x_\mathsf{traffic}> \mathsf{Heavy}) \land \mathsf{School}$   \\\hline
\textbf{AR4  }        & If there is an event and the weather is rainy or snowy heavily, the average traffic level around school should be better than \textit{Heavy} &  $\mathsf{Event} \land \mathsf{Humidity} > 50\% \rightarrow \ew\always_{[0,3]} (\agr_{([0,1],\top)}^{\avg} x_\mathsf{traffic}> \mathsf{Heavy} ) \land \mathsf{School}$  \\\hline
\textbf{AR5  }        & With big cultural events going on the city, over the city, 80\% schools' average traffic volume nearby (3km) should always be better than \textit{Moderate}.                             & 
$\mathsf{Event} \rightarrow \ct \always_{[0,3]} (\agr_{([0,1],\top)}^{\avg} x_\mathsf{traffic}> \mathsf{Moderate} ) \land \mathsf{School} > 80\%$  \\\hline
\end{tabular}}
\vspace{-0.3cm}
\end{table*}

\subsubsection{Performance}
The monitoring results from Aarhus are shown in \figref{fig:aarhus_data}. The percentage of satisfaction equals to the number of requirement satisfied days divided by 60 days. 
The following are observations on the requirements and monitoring results.

\begin{itemize}
    \item Comparing the monitoring results on AR1 and AR2, AR1 has a much lower satisfaction rate. It also leads to a higher and reliable satisfaction rate. 
    \item Comparing to AR2, for the same events, AR3 moves its focus on the area nearby schools. The results, however, are lower than AR2. It means that events have more influence on the school areas, which should draw attention from the city managers. Students should reduce or avoid activities during this time when there is an event going on nearby. 
    \item During 11am to 2pm, the overall performance on all five requirements are worst, even less than 50\%. It is actually the time period right after a morning event or before an afternoon event. The monitoring results help the city managers have a better view of the distribution of effects from events.  
    \item We also find that the satisfaction rate is very high (almost 100\%) after 8pm. The reasons for that are the schools are usually closed at that time, and most of cultural and library events happen during the day. In other cities or events, the distribution will be different. However, the SaSTL monitor is general enough to help citizens and managers detect it. 
\end{itemize}

The evaluation on Aarhus shows how the SaSTL monitor helps the city to understand the effects on the city from events and make better plans for events. 
Usually, areas with an event get caught up in complicated situations, such as paralyzed traffic, long queues with a large amount of people, emergencies and accidents.  Therefore, playing back and analyzing the city data during events is extremely important for cities to avoid emergency situations for future events. }


\section{Coverage Analysis}

We compare the specification coverage on 1000 quantitatively-specified real city requirements between STL, SSTL, STREL and SaSTL.  
The study is conducted by graduate students following the rules that if the language is able to specify the whole requirement directly with one single formula, then it is identified as True.
To be noted, another spatial STL, SpaTeL is not considered as a baseline here, because it is not applicable to most of city spatial requirements. SpaTeL is built on a quad tree, and able to specify directions rather than the distance. 

STL is only able to specify 184 out of 1000 requirements, while SSTL and STREL are able to formalize 431 requirements. 
SaSTL is able to specify 950 out of 1000 requirements.
In particular, we made the following observations from the results. 
First, 
50 requirements cannot be specified using any of the four languages because they are defined by complex math formulas that are ambiguous with missing key elements, relevant to the operations of many variables, or referring to a set of other requirements, e.g. ``follow all the requirements from Section 201.12'', etc.  
Secondly, SSTL, STREL and SaSTL outperformed STL in terms of requirements with spatial ranges, such as ``one-mile radius around the entire facility''; Third, SSTL and STREL have the same coverage on the requirements that only contain a temporal and spatial range. Comparing to SSTL and SaSTL, STREL can also be applied to dynamic graph and check requirements reachability, which is very useful in applications like wireless sensor networks, but not common in smart city requirements;
Fourth, the rest of the requirements (467 out of 1000) measure the aggregation of a set of locations, which can only be specified using SaSTL.

\section{Related Work}
\label{sect:related}


\newcontent{Monitoring spatial-temporal properties over CPS executions
has been initially investigated in~\cite{Talcott08,TVG09}, where the 
authors introduced a spatial-temporal event-based model for monitoring CPS. 
In this model, events are labeled with time and space stamps. These events 
can be triggered by actions, exchange of messages or physical changes.
A centralized monitor is then responsible to process all these events. Their approach 
provides an algorithmic framework enabling a user to develop manually 
a monitor, but they do not provide any spatial-temporal specification language.}
\newcontent{The literature instead offers several logic-based 
specification languages to reason about the spatial structure of concurrent systems~\cite{CC04}, medical images~\cite{BuonamiciBCLM20}, and the topological~\cite{BC02} or directional~\cite{BS10} aspects of the interacting components. 
 However, these logics are not practical for monitoring CPS, because 
 they are generally computationally complex~\cite{BS10} or even undecidable~\cite{MR99}.
 
Specification-based monitoring of spatial-temporal properties over CPS executions 
has become practical only recently with SpaTeL~\cite{bartocci2015} and SSTL~\cite{NenziBCLM15}. 
SpaTeL extends the Signal Temporal Logic~\cite{Maler2004} (STL) with the Tree Spatial Superposition Logic (TSSL)~\cite{bartocci2014,BartocciGHB18}. TSSL classifies and detects spatial patterns by reasoning over-quad trees, suitable } spatial data
structures that are constructed by recursively partitioning the space into uniform quadrants. The notion of
superposition in TSSL~\cite{BartocciGHB18} provides a way to describe statistically the distribution of discrete states in a particular
partition of the space and the spatial operators corresponding to \emph{zooming in and out in} a particular region of
the space. By nesting these operators, it is possible to specify self-similar and fractal-like structures \cite{GrosuSCWEB09} that generally characterize the patterns emerging in nature such as the electrical spiral formation in cardiac tissues~\cite{BartocciCBESG09}.
The procedure allows one to capture very complex spatial structures, but at the price of a complex formulation of spatial properties, which are in practice only learned from some template image.

\newcontent{SSTL~\cite{NenziBCLM15} extends STL with several 
spatial operators (i.e., somewhere, everywhere, and surround). 
The SSTL semantics operates on a weighted undirected graph, where
the weight on each edge represents the distance between two nodes. 
The Spatial Temporal Reach and Escape Logic (STREL)~\cite{BartocciBLN17,BartocciBLNS20} 
generalizes SSTL, by introducing two new spatial operators, (\emph{reach} and \emph{escape}), which are able to express the same spatial operators of SSTL. 
Furthermore, while SSTL can be 
applied only on static weight undirected graphs, STREL can be 
applied also to dynamic networks.
However, both SSTL and STREL do not support spatial aggregation 
operators that we show to be an important feature for monitoring smart cities.} 

\section{Conclusion}
\label{sect:sum}


In this paper, we present a novel Spatial Aggregation Signal Temporal Logic to specify and to monitor requirements of smart cities at runtime. 
We develop an efficient monitoring framework 
that optimizes the requirement parsing process and can check in parallel a SaSTL requirement over multiple data streams generated  
from thousands of sensors that are typically spatially distributed over a smart city.
SaSTL is a powerful specification language for smart cities because of its capability to monitor the city desirable features of temporal (e.g., interval), spatial (e.g., PoIs, range) and their complicated relations (e.g. always, everywhere, aggregation) between them.  
More importantly, it can coalesce many requirements into a single SaSTL formula and provide the aggregated results efficiently, which is a major advance on what smart cities do now. 
\revision{The development of 5G and 6G could better support the monitoring and communication among sensors, services and the city center.  
We believe it is a valuable step towards developing a practical smart city monitoring system 
even though there are still open issues for future work.  Furthermore, SaSTL monitor can also be easily generalized and applied to monitor other large-scale IoT deployments at runtime efficiently. In the future, we will explore its capability to specify and monitor other properties and requirements (e.g., security and privacy).}



\section*{Acknowledgement}
This work was supported in part by National Science Foundation grants CCF-1942836, CNS-1952096 and by the Austrian FFG-funded IoT4CPS project at TU Wien.

\ifCLASSOPTIONcaptionsoff
  \newpage
\fi



\bibliographystyle{IEEEtran}
\bibliography{IEEEabrv,monibib.bib}

%

%


\begin{IEEEbiographynophoto}{Meiyi Ma} is a Ph.D. candidate of Computer Science at the University of Virginia. Her research interests are at the intersection of cyber-physical systems, deep learning and formal methods. 
\end{IEEEbiographynophoto}
\vspace{-1.5cm}


\begin{IEEEbiographynophoto}{Ezio Bartocci} is a full professor at the Faculty of Computer Science of TU Wien, where he leads the Trustworthy Cyber-Physical Systems (TrustCPS) Group. The primary focus of his research is to develop formal methods, computational tools and techniques that support the modeling and the automated analysis of complex computational systems, including software systems, cyber-physical systems and biological systems.
\end{IEEEbiographynophoto}
\vspace{-1.5cm}

\begin{IEEEbiographynophoto}{Eli Lifland} is a software engineer at Ought. He received his Bachelor's degree in Computer Science and Economics at the University of Virginia. 
\end{IEEEbiographynophoto}
\vspace{-1.5cm}

\begin{IEEEbiographynophoto}
{John A. Stankovic} is the BP America Professor in the Computer Science Department
at the University of Virginia and Director of the Link Lab. He is a Fellow of both the IEEE and
the ACM. He has been awarded an Honorary Doctorate from the University of York for his work
on real-time systems. His research interests are in smart and connected health, cyber physical
systems, and the Internet of Things. Prof. Stankovic received his PhD from Brown University.
\end{IEEEbiographynophoto}
\vspace{-1.5cm}

\begin{IEEEbiographynophoto}{Lu Feng} is an Assistant Professor of Computer Science at the University of Virginia. Her research interests are in cyber-physical systems and formal methods. Dr. Feng received her PhD in Computer Science from the University of Oxford in 2014. She is a member of ACM and IEEE.
\end{IEEEbiographynophoto}






\clearpage
\newpage
\section*{Appendix}

\noindent\textbf{{1. Preliminaries on Signal Temporal Logic}}

The syntax of an STL formula $\varphi$ is usually defined as follows,

$
\varphi::=\mu \ |\ \lnot \varphi \ |\ \varphi \land \varphi \ |\ \eventually_{(a,b)} \varphi \ |\ \always_{(a,b)} \varphi \ |\ \varphi \mathbf{U}_{(a,b)} \varphi. 
$

We call $\mu$ a signal predicate, which is a formula in the form of $f(x) \ge 0$ with a signal variable $x \in \mathcal{X}$ and a function $f: \mathcal{X} \to \mathbb{R}$.
The temporal operators $\square$, $\lozenge$, and $\mathbf{U}$ denote ``always", ``eventually" and ``until", respectively. 
The bounded interval ${(a,b)}$ denotes the time interval of temporal operators. 

Below we present the formal definition of STL Boolean semantics. To informally explain the STL operations, formula $\square_{(a,b)} \varphi$ is true iff $\varphi$ is always true in the time interval ${(a,b)}$.
Formula $\lozenge_{(a,b)} \varphi$ is true iff $\varphi$ is true at sometime between $a$ and $b$.
Formula $\varphi_1 \mathbf{U}_{(a,b)} \varphi_2$ is true iff $\varphi_1$ is true until $\varphi_2$ becomes true at sometime between $a$ and $b$.

\begin{equation*}
\begin{array}{clcl}
    (\traceset, t) & \sat \mu   & \eqdef&  f(x)>0   \\    
    (\traceset, t) & \sat \neg \varphi 
        & \eqdef &  (\traceset, t) \sat \varphi \\
    (\traceset, t) & \sat \varphi_1 \land \varphi_2 
        & \eqdef&  (\traceset, t) \sat \varphi_1 \mbox{ and } (\traceset, t) \sat \varphi_2  \\
            (\traceset, t) & \sat \always_{(a,b)}
        & \eqdef& \forall t \in (a,b), (\traceset, t) \sat \varphi\\
                (\traceset, t) & \sat  \eventually_{(a,b)}  
        & \eqdef& \exists t \in (a,b) \cap \Tset, (\traceset, t) \sat \varphi \\
    (\traceset, t) & \sat \varphi_1 \until \varphi_2 
        & \eqdef& \exists t' \in (t+a, t+b) \cap \Tset, (\traceset, t') \sat \varphi_2 
          \\ &&&  \mbox{ and }  \forall t'' \in (t, t'), (\traceset, t'') \sat \varphi_1\\
\end{array}
\end{equation*}

Next, we present the formal definition of STL quantitative semantics.

\begin{alignat*}{2}
    &\rho(x \sim c, \omega, t)  
        &&= \pi_x(\omega)[t] - c \\
    &\rho(\neg \varphi, \omega, t)  
        &&= - \rho(\varphi, \omega, t) \\
    &\rho(\varphi_1 \land \varphi_2, \omega, t)  
        &&=  \min\{\rho(\varphi_1, \omega, t), \rho(\varphi_2, \omega, t) \}\\
         & \rho(\always_I \varphi, \traceset, t) && = \underset{t' \in (t, t+I)}{\min} 
   \rho(\varphi, \traceset, t')
          \\
    & \rho(\eventually_I \varphi , \traceset, t) && = \underset{t' \in (t, t+I)}{\max} 
   \rho(\varphi, \traceset, t')
          \\
    &\rho(\varphi_1 \until \varphi_2, \omega, t)  
        &&= \sup_{t'\in (t + I) \cap \mathbb{T}} (\min\{\rho(\varphi_2, \omega, t'), \\ & &&\inf_{t''\in[t,t']}(\rho(\varphi_1, \omega, t'')) \})\\
\end{alignat*}


\noindent
\textbf{2. Proofs}

\begingroup
\def\thetheorem{1}
\begin{theorem}[Soundness, restate]
Let $\varphi$ be an STL formula, $\omega$ a trace and $t$ a time,
\begin{equation*}
    \begin{array}{cc}
         \rho(\varphi, \omega, t, l) > 0 &  \Rightarrow (\omega, t, l) \sat \varphi \\
        \rho(\varphi, \omega, t, l) < 0 &  \Rightarrow (\omega, t, l) \not\sat \varphi
    \end{array}
\end{equation*}
\end{theorem}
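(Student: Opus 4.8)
The plan is to prove both implications simultaneously by structural induction on the SaSTL formula $\varphi$, exploiting the fact that the quantitative semantics is built so that the \emph{sign} of $\rho$ tracks Boolean satisfaction, while the strict inequalities $\rho>0$ and $\rho<0$ deliberately exclude the boundary $\rho=0$ (where the Boolean value may legitimately depend on whether $\sim$ is $<$ or $\le$). For the base case $\varphi = x \sim c$ the robustness is $\rho(x \sim c, \omega, t, l) = \pi_x(\omega)[t,l] - c$, so $\rho>0$ forces $\pi_x(\omega)[t,l] > c$ and hence $(\omega,t,l) \sat \varphi$, while $\rho<0$ forces the strict opposite; this is immediate once the predicate orientation is fixed, since the sign of a difference agrees with the corresponding inequality.

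For the propositional and temporal fragment the inductive steps coincide with the classical STL soundness argument recalled in the appendix, so I would invoke the inductive hypothesis pointwise. Negation is trivial because $\rho(\neg\varphi) = -\rho(\varphi)$ flips the sign and the Boolean clause flips satisfaction. For conjunction the monitored robustness is $\min\{\rho(\varphi_1), \rho(\varphi_2)\}$, so a positive minimum forces both conjuncts to have positive robustness (hence both hold by the hypothesis), whereas a negative minimum exhibits a conjunct of negative robustness (hence one fails). The until operator is the only nontrivial temporal case: the structure $\sup_{t'}\min\{\rho(\varphi_2,t'), \inf_{t''}\rho(\varphi_1,t'')\}$ must be matched to the existential-over-$t'$, universal-over-$t''$ shape of the Boolean clause, with a positive supremum supplying a witness $t'$ at which both the $\varphi_2$ term and the $\varphi_1$ infimum are positive; applying the hypothesis along $[t,t']$ then yields exactly the required witness.

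The spatial aggregation operator $\ag^{\op} x \sim c$ requires no hypothesis, since it reads raw signal values: for $\op \in \{\max,\min,\avg\}$ the robustness is $\op(\nbx) - c$, whose sign equals that of the inequality $\op(\nbx) \sim c$, and in the sum case the robustness $(\mathsf{sum}(\nbx)-c)/|\nbx|$ has the same sign as $\mathsf{sum}(\nbx)-c$ because $|\nbx| > 0$ on the nonempty domain where the formula is evaluated. For the counting operator the $\max$ and $\min$ cases follow by applying the hypothesis at each $l' \in \nb^l$: since the counting robustness for $\op=\max$ equals $\max_{l' \in \nb^l}\rho(\varphi, \omega, t, l')$, a positive value comes from some location with positive sub-robustness, which therefore satisfies $\varphi$, matching the derived \emph{somewhere} operator $\sw\varphi \equiv \ct^{\max}\varphi > 0$; the $\min$ case is dual and yields the \emph{everywhere} operator $\ew$.

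The main obstacle is the $\mathsf{sum}$ and $\mathsf{avg}$ counting cases, where the robustness is the order statistic $\smallfunction(k, \{\rho(\varphi,\omega,t,l') \mid l' \in \nb^l\})$ with $k = \ceil{c}$, respectively $k = \ceil{c \times |\nb^l|}$. The key auxiliary fact I would establish is that $\smallfunction(k,S) > 0$ holds exactly when at least $|S| - k + 1$ elements of $S$ are positive (and symmetrically for the negative case); combined with the hypothesis applied locationwise, this converts the sign of the $k$th smallest sub-robustness into a count of locations satisfying $\varphi$. I then need to reconcile this count with the Boolean counting clause $\op(\{g((\omega,t,l') \sat \varphi)\}) \sim c$, i.e. verify that the order-statistic index $k$ and the threshold implicitly selected by $c$ encode the same number of required satisfying neighbours. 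This index bookkeeping --- pinning down the correspondence between ``the $k$th smallest is positive'' and ``at least so many neighbours satisfy $\varphi$'', including the ceilings and the admissible ranges of $c$ --- is the delicate part; the tie/boundary case in which some sub-robustness equals zero falls outside the strict inequalities of the statement and so needs no separate treatment.
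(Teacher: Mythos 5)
Your proof has the same skeleton as the paper's: a joint structural induction on both implications, with the identical case split (predicate, negation, conjunction, until, aggregation, counting), and in the cases you actually complete, your reasoning coincides with the paper's step for step. You are in fact more careful than the paper on $\ag^{\mathrm{sum}}$, where you note explicitly that dividing by $|\nbx| > 0$ preserves the sign.

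The genuine gap is exactly the step you defer as ``index bookkeeping,'' and it is not bookkeeping: under the literal definition of $\smallfunction(k,S)$ as the $k$th \emph{smallest} element of $S$ --- the reading your auxiliary fact correctly formalizes --- the reconciliation you postpone is false, not merely delicate. Positivity of $\smallfunction(\ceil{c},S)$ yields only that at least $|\nb^l| - \ceil{c} + 1$ neighbours have positive sub-robustness, while Boolean satisfaction of $\ct^{\mathrm{sum}}\varphi > c$ demands more than $c$ of them; these two counts come apart as soon as $c$ exceeds roughly half of $|\nb^l|$. Concretely, take $|\nb^l| = 10$, $c = 9$, and sub-robustness multiset $\{-1,\dots,-1,3,5\}$ (eight copies of $-1$): the $9$th smallest value is $3 > 0$, so the robustness is positive, yet only two neighbours satisfy $\varphi$ and the Boolean formula is violated --- the first implication of the theorem fails outright on the path you sketch. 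The missing idea is that the order statistic must be taken in \emph{descending} order: the robustness of $\ct^{\mathrm{sum}}\varphi > c$ has to be (up to the integer-threshold edge case) the $\ceil{c}$th \emph{largest} sub-robustness, which is positive precisely when at least $\ceil{c}$ neighbours have positive sub-robustness. That descending reading is what the paper's own proof silently adopts --- it infers ``at least $\ceil{c}$ locations with $\rho > 0$'' directly from positivity of the statistic, an inference invalid for the $k$th smallest --- and it is the only reading consistent with the paper's worked example, where $\rho(\ct^{\mathrm{sum}}(\{-4,-3,2\}) > 1) = -3$ is the second \emph{largest} value. Once the statistic is flipped, your locationwise application of the inductive hypothesis closes the $\mathrm{sum}$ and $\mathrm{avg}$ cases exactly as the paper does.
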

\endgroup

\begin{proof}
We prove the first property $\rho(\varphi, \omega, t, l) > 0  \Rightarrow (\omega,t,l) \sat \varphi$ by induction:

First we show the soundness property hold for the predicate $\varphi:=\mu$. 
In this case, we have $\rho(\varphi, \omega, t, l) = f(x)$. Therefore, if $\rho(\varphi, \omega, t, l) >0$ we have $ f(x) >0$, that is, $(\omega,t,l) \sat \varphi$.

Case $\varphi = \neg \varphi'$: We have $\rho(\varphi, \omega, t) = -\rho(\varphi', \omega, t, l) > 0$. Therefore we have  $\rho(\varphi', \omega, t, l) < 0$, that is, $ (\omega, t, l) \not \sat \varphi'$, which is equivalent to $ (\omega, t, l) \sat \varphi$ by definition.   

Case $\varphi = \varphi_1 \land \varphi_2$: 
We have $\rho(\varphi_1 \land \varphi_2, \omega, t, l)  =  \min \{\rho(\varphi_1, \omega, t,l), \rho(\varphi_2, \omega, t,l)\} > 0 $. Therefore, we have $\rho(\varphi_1, \omega, t,l) >0$ and $\rho(\varphi_1, \omega,t,l) > 0$. Thus, $(\omega,t,l) \sat \varphi_1$ and  $(\omega,t,l) \sat \varphi_2$. By definition, we have $(\omega,t,l) \sat \varphi$.

Case $\varphi = \varphi_1 \until \varphi_2$: 
$\rho=\underset{t' \in (t, t+I)}{\max}
    \{\min\{\rho(\varphi_2, \omega, t',l), 
    \underset{t'' \in (t, t')}{\min} 
   \rho(\varphi_1, \omega, t'',l)\}\} >0$. We have $\exists t' \in (t+I), \min\{\rho(\varphi_2, \omega, t',l), 
    \underset{t'' \in (t, t')}{\min} 
   \rho(\varphi_1, \omega, t'',l)\} > 0$. Therefore, $\exists t' \in (t+I), \rho(\varphi_2, \omega, t',l) > 0 \land
    \underset{t'' \in (t, t')}{\min} 
   \rho(\varphi_1, \omega, t'',l) > 0$.  Thus, it's equivalent to   $ \exists t' \in (t+I) \cap \Tset, (\omega, t',l) \sat \varphi_2 
            \mbox{ and } \forall t'' \in (t, t'), (\omega, t'',l) \sat \varphi_1$.
            By definition, we have $(\omega, t,l) \sat \varphi$.
    
Case $\varphi = \ag^{\op} x \sim c$: we have
 $\rho(\ag^{\op} x \sim c, \omega, t, l) > 0$, which indicates $\op (\nbx) - c > 0$, following the definition, we have 
$(\ag^{\op} x \sim c, \omega, t, l)  \sat \varphi$.

Case $\varphi = \ct^{\op} \varphi \sim c$
when $\mathsf{op = max}$, we have $\max_{l' \in \nb^l}\{\rho(\varphi,\omega,t,l')\} > 0$, thus, there is at least one location $l\in \mathcal{D}$,$\rho(\varphi,\omega,t,l) > 0$, i.e., $(\omega,t,l) \sat \varphi$, therefore, $\mathsf{max}(\{g((\omega, t, l')  \sat \varphi) \ | \ l' \in \nb^l\}) > c$ ($c\in[0,1)$) is true, therefore, $(\omega, t, l) \sat \ct^{\mathsf{max}} \varphi > c$.
when $\mathsf{op = min}$, we have $\min_{l' \in \nb^l}\{\rho(\varphi,\omega,t,l')\} > 0$, thus, for any location, $\rho(\varphi,\omega,t,l) > 0$, i.e., $l\in \mathcal{D}$, $(\omega,t,l) \sat \varphi$, therefore, $\mathsf{min}(\{g((\omega, t, l')  \sat \varphi) \ | \ l' \in \nb^l\}) >c$ ($c\in[0,1)$) is true, therefore, $(\omega, t, l) \sat \ct^{\mathsf{min}} \varphi \sim c$.
When $\mathsf{op = sum}$, we have $\smallfunction(\ceil[\big]{c}, \{\rho(\varphi,\omega,t,l') \ | \ l' \in \nb^l \})>0$, thus, for at least $\ceil[\big]{c}$ locations $l$, we have $\rho(\varphi,\omega,t,l) \ | \ l \in \nb^l>0$, i.e., $\mathsf{sum}(\{g((\omega, t, l)  \sat \varphi) \ | \ l \in \nb^l\}) > c$ is true, 
therefore, $(\omega, t, l) \sat \ct^{\mathsf{sum}} \varphi > c$. Similarly, we can prove when $\mathsf{op = avg}$, if $ \smallfunction(\ceil[\big]{c \times |\nb^l |}, \{\rho(\varphi,\omega,t,l') \ | \ l' \in \nb^l \})>0$, then $(\omega, t, l) \sat \ct^{\mathsf{avg}} \varphi > c$.

\end{proof}

\begingroup
\def\thetheorem{2}
\begin{theorem}[Correctness, restate]
Let $\varphi$ be an STL formula, $\omega$ and $\omega'$  traces over the same time and spatial domains, and $t, l\in dom(\varphi, \omega)$, then

\begin{equation*}
    (\omega, t, l) \sat \varphi~and~||\omega - \omega'||_\infty < \rho(\varphi, \omega, t, l) \Rightarrow (\omega', t, l) \sat \varphi
\end{equation*}
\end{theorem}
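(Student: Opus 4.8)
The plan is to reduce the statement to two ingredients: a non-expansiveness (Lipschitz) property of the robustness function, and the soundness direction $\rho > 0 \Rightarrow {}\sat$ established earlier. Concretely, I would first prove the auxiliary lemma that $\rho(\cdot,\cdot,t,l)$ is $1$-Lipschitz in the sup norm, i.e.
\[
|\rho(\varphi, \omega, t, l) - \rho(\varphi, \omega', t, l)| \le ||\omega - \omega'||_\infty
\]
for every SaSTL formula $\varphi$ and every pair of traces over the same domains. Granting this, the theorem follows quickly: the hypothesis $||\omega-\omega'||_\infty < \rho(\varphi,\omega,t,l)$ together with $||\omega-\omega'||_\infty \ge 0$ forces $\rho(\varphi,\omega,t,l) > 0$, and then the lemma gives $\rho(\varphi,\omega',t,l) \ge \rho(\varphi,\omega,t,l) - ||\omega-\omega'||_\infty > 0$, whence $(\omega',t,l)\sat\varphi$ by soundness.

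The lemma itself I would prove by structural induction on $\varphi$, writing $\epsilon := ||\omega-\omega'||_\infty$. The base case $\varphi := x\sim c$ is immediate, since the difference of robustness values equals $\pi_x(\omega)[t,l]-\pi_x(\omega')[t,l]$, bounded in absolute value by $\epsilon$ by definition of the sup norm. Negation is trivial as $\rho(\neg\varphi)=-\rho(\varphi)$, and the Boolean and temporal cases rest on the elementary non-expansiveness facts $|\min_i a_i - \min_i b_i| \le \max_i|a_i-b_i|$, $|\max_i a_i-\max_i b_i|\le\max_i|a_i-b_i|$ and their $\sup/\inf$ analogues; these dispatch $\wedge$ (a $\min$), $\vee$ (a $\max$) and $\until$ (a nested $\sup/\inf/\min$) directly from the induction hypothesis. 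For the spatial aggregation operator $\ag^{\op} x \sim c$ the same facts handle $\op\in\{\max,\min\}$, while the $\avg$ and normalized $\mathsf{sum}$ cases reduce to the observation that an average, respectively a sum divided by the cardinality $|L^l_{\ra}|$, of entries each perturbed by at most $\epsilon$ is itself perturbed by at most $\epsilon$.

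The step I expect to be the main obstacle is the counting operator $\ct^{\op}\varphi\sim c$ in the $\mathsf{sum}$ and $\avg$ modes, whose robustness is defined through the order-statistic function $\smallfunction(k,S)$ returning the $k$th smallest element of $S$. Here I would prove a dedicated non-expansiveness claim for order statistics: if two finite sets are indexed by the same location set $\nb^l$ and corresponding entries differ by at most $\epsilon$, then their $k$th smallest values differ by at most $\epsilon$. The argument is a counting one — at least $k$ entries of the second set lie at or below $\smallfunction(k,S')$, so the entries of the first set matched to them lie below $\smallfunction(k,S')+\epsilon$, giving $\smallfunction(k,S)\le\smallfunction(k,S')+\epsilon$, and symmetrically the reverse bound — but it must be phrased in terms of the \emph{location-indexed matching} rather than the sorted order, since it is the matching that the induction hypothesis controls (each $\rho(\varphi,\omega,t,l')$ is within $\epsilon$ of $\rho(\varphi,\omega',t,l')$ at the same $l'$). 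With this claim in hand, the $\max$/$\min$ counting modes follow as before, and the $\mathsf{sum}$/$\avg$ modes follow by applying the order-statistic bound with $k=\lceil c\rceil$ and $k=\lceil c\times|\nb^l|\rceil$ respectively, closing the induction and hence the proof.
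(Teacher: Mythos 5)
Your proof is correct, and it takes a genuinely different route from the paper's. The paper proves the theorem by a single structural induction on the correctness implication itself: in each case it assumes $(\omega,t,l)\sat\varphi$ and $||\omega-\omega'||_\infty < \rho(\varphi,\omega,t,l)$, derives lower bounds on the robustness of the subformulas under $\omega$, applies the inductive hypothesis to get satisfaction of those subformulas by $\omega'$, and reassembles these facts into $(\omega',t,l)\sat\varphi$; robustness estimates and satisfaction claims are thus interleaved throughout the induction. You instead factor the theorem into two independent pieces: a $1$-Lipschitz (non-expansiveness) lemma for $\rho$ in the sup norm, proved by structural induction on $\varphi$, plus the already-established soundness theorem, after which correctness follows in three lines. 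The two inductions confront the same technical crux, namely the counting operator $\ct^{\op}$ with $\op\in\{\mathsf{sum},\mathsf{avg}\}$, and resolve it by essentially the same counting argument on order statistics of the location-indexed family $\{\rho(\varphi,\omega,t,l') : l'\in \nb^l\}$; your insistence on phrasing the bound via the location-indexed matching rather than the sorted order is exactly the care required, and mirrors the paper's ``at least $\lceil c\rceil$ locations'' argument. What your factoring buys: the Lipschitz lemma is a stronger, reusable quantitative statement (it is the standard route in the STL robustness literature), the argument is modular, and it exposes that the hypothesis $(\omega,t,l)\sat\varphi$ in the statement is actually redundant, since $||\omega-\omega'||_\infty < \rho(\varphi,\omega,t,l)$ together with non-negativity of the norm already forces $\rho(\varphi,\omega,t,l)>0$ and hence satisfaction by soundness. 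What the paper's direct induction buys: each case needs only a one-sided robustness estimate (a lower bound on $\rho(\cdot,\omega',\cdot)$), never the two-sided Lipschitz bound, so each individual case is slightly lighter, at the cost of re-deriving fragments of the robustness-to-satisfaction link inside the cases rather than citing soundness once.
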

\endgroup

\begin{proof}

First, whenever $\rho(\varphi, \omega, t, l) \neq 0$, its sign indicates the satisfaction status. 

By induction, we have the following cases:

\noindent
Case $\varphi := x \sim c$: 
We have $\rho(\varphi, \omega', t, l)= \pi_x(\omega')[t, l] - c \geq \pi_x(\omega)[t, l] - c - ||\omega - \omega'||_\infty = \rho(\varphi, \omega, t, l) - ||\omega - \omega'||_\infty> 0$. Therefore, we have $(\omega', t, l) \sat \varphi$. 

\noindent
Case $\varphi := \neg \varphi'$: We have $\rho(\varphi, \omega', t, l) = - \rho(\varphi', \omega', t, l)  $. By the inductive assumption we have $\rho(\varphi', \omega', t, l)<0$. Therefore, we have $(\omega', t, l) \sat \varphi$. 

\noindent
Case $\varphi := \varphi_1 \lor \varphi_2$: Following the condition, we have either $(\omega,t,l) \sat \varphi_1$ holds or $(\omega,t,l) \sat \varphi_2$ holds. We also have $\rho(\varphi, \omega', t,l) = \max\{\rho(\varphi_1, \omega', t, l), \rho(\varphi_2, \omega', t, l)\}$. If $(\omega,t,l) \sat \varphi_1$, by the inductive assumption we have $\rho(\varphi_1, \omega', t, l) > 0$. Therefore, $\rho(\varphi, \omega, t,l) > 0$. Similarly, if $(\omega,t,l) \sat \varphi_2$, by the inductive assumption we have $\rho(\varphi_2, \omega', t, l) > 0$. Therefore, we have $(\omega', t, l) \sat \varphi$. 

\noindent
Case $\varphi = \varphi_1 \until \varphi_2$: As $(\omega, t, l) \sat \varphi$, there exists $t'$ that  $\forall t'' \in (t, t'), \rho(\varphi_1, \omega, t'', l) \geq \rho(\varphi, \omega, t, l)$ and $\rho(\varphi_2, \omega, t', l) \geq \rho(\varphi, \omega, t, l)$. By the inductive assumption, we have $(\omega', t', l) \sat \varphi_2$ and $\forall t'' \in (t, t'), (\omega', t'', l) \sat \varphi_1$. Therefore, we have $(\omega', t, l) \sat \phi$.

\noindent   
Case $\varphi = \ag^{\op} x \sim c$: 
\begin{itemize}
    \item[-] When $\mathsf{op=sum}$, $\rho(\phi, \omega', t, l) = \frac{\mathsf{sum} (\alpha_{\ra}^x(\omega', t, l)) - c}{|\alpha_{\ra}^x(\omega', t, l)|} \geq \frac{\mathsf{sum} (\alpha_{\ra}^x(\omega, t, l)) - c - \sum_{d \in \nbx} ||\omega-\omega'||_{\infty}}{|\alpha_{\ra}^x(\omega, t, l)|} = \rho(\phi, \omega, t, l) - ||\omega-\omega'||_{\infty} > 0$. Therefore, we have $(\omega', t, l) \sat \phi$.
    \item[-] When $\mathsf{op \neq sum}$, we first show that $\mathsf{op}(\alpha_{\ra}^x(\omega, t, l)) - \mathsf{op}(\alpha_{\ra}^x(\omega', t, l)) \leq ||\omega-\omega'||_{\infty}$. Recall the definition that $\nbx:=\{\pi_x(\omega)[t, l'] \ | \ l' \in \nb^l \mbox{ and } \pi_x(\omega)[t, l'] \neq \bot\}$. For any combination of t and l, $\pi_x(\omega)[t, l] \leq \pi_x(\omega')[t, l] + ||\omega - \omega'||_{\infty}$. As all the items of $\nbx$ holds the property, for the operators max, min and avg, $\mathsf{op}(\alpha_{\ra}^x(\omega, t, l)) - \mathsf{op}(\alpha_{\ra}^x(\omega', t, l)) \leq ||\omega-\omega'||_{\infty}$.
    
    Therefore we have $\rho(\phi, \omega', t, l) = \mathsf{op}(\alpha_{\ra}^x(\omega', t, l)) - c \geq \mathsf{op}(\nbx) - ||\omega-\omega'||_{\infty} - c= \rho(\phi, \omega, t, l) - ||\omega-\omega'||_{\infty} > 0$, which indicates $(\omega', t, l) \sat \phi$.
\end{itemize}

\noindent
Case $\varphi = \ct^{\op} \varphi' \sim c$: 
\begin{itemize}
\item[-] When $\mathsf{op=sum}$, as $\rho(\ct^{\op} \varphi' \sim c, \omega, t, l) = \smallfunction(\ceil[\big]{c}, \{\rho(\varphi',\omega',t,l') \ | \ l' \in \nb^l \})$, we know that there exists at least $\ceil[\big]{c}$ different $l' \in \nb^l $ that $\rho(\varphi',\omega,t,l') \geq \rho(\ct^{\op} \varphi' \sim c, \omega, t, l) > ||\omega - \omega'||_{\infty}$. By the inductive rule, we have at least $\ceil[\big]{c}$ different $l' \in \nb^l $ that $\rho(\varphi',\omega',t,l') > 0$. Therefore, by the defintion of $\rho(\ct^{\sum} \varphi' \sim c)$ of we have $(\omega', t, l) \sat \phi$.
\item[-] Similarly when $\mathsf{op=avg}$, as $\rho(\ct^{\op} \varphi' \sim c, \omega, t, l) = \smallfunction(\ceil[\big]{c \times |\nb^l |} \{\rho(\varphi',\omega',t,l') \ | \ l' \in \nb^l \})$, we know that there exists at least $\ceil[\big]{c \times |\nb^l |}$ different $l' \in \nb^l $ that $\rho(\varphi',\omega,t,l') \geq \rho(\ct^{\op} \varphi' \sim c, \omega, t, l) > ||\omega - \omega'||_{\infty}$. By the inductive rule, we have at least $\ceil[\big]{c \times |\nb^l |}$ different $l' \in \nb^l $ that $\rho(\varphi',\omega',t,l') > 0$. Therefore, we have $(\omega', t, l) \sat \phi$.
\item[-] When $\mathsf{op=max}$, $\rho(\ct^{\op} \varphi' \sim c, \omega, t, l) = \max_{l' \in \nb^l}\{\rho(\varphi,\omega,t,l') \}$. Let $l'$ be the location that $\rho(\varphi,\omega,t,l')$ achieves maximum, we have $\rho(\varphi',\omega,t,l') \geq \rho(\ct^{\op} \varphi' \sim c, \omega, t, l) > ||\omega - \omega'||_{\infty}$. By the inductive rule, $\rho(\varphi',\omega',t,l') > 0$. Therefore, we have $(\omega', t, l) \sat \phi$.
\item[-] When $\mathsf{op=min}$, $\rho(\ct^{\op} \varphi' \sim c, \omega, t, l) = \min_{l' \in \nb^l}\{\rho(\varphi,\omega,t,l') \}$. We have for every $l' \in \nb^l$, $\rho(\varphi',\omega,t,l') \geq \rho(\ct^{\op} \varphi' \sim c, \omega, t, l) > ||\omega - \omega'||_{\infty}$. By the inductive rule, We have for every $l' \in \nb^l$ that $\rho(\varphi',\omega',t,l') > 0$. Therefore, we have $(\omega', t, l) \sat \phi$.
\end{itemize}
\end{proof}

\noindent
\begin{algorithm}[h]
\tablefontsize
  \SetKwFunction{Monitor}{MonitorB}
  \SetKwProg{Fn}{Function}{:}{}
  {
      \SetKwInOut{Input}{Input}
      \SetKwInOut{Output}{Output}
      \SetKwFor{Case}{Case}{}{}
      \Input{SaSTL Requirement $\varphi$, Signal $\omega$, Time $t$, Location $l$, weighted undirected graph $G$}
      
      \Output{Boolean Satisfaction Value}
      
      \Begin{
            \Switch{$\varphi$} {

                   \Case{$x\sim c$}{
                        \Return $\pi_x(\omega)[t, l] \sim c$;
                   }

                   \Case{$\neg \varphi$}{
                        \Return $\neg$ \Monitor($\varphi,\omega, t, l, G$);
                   }
 
                   \Case{$\varphi_1 \land \varphi_2$  \Comment*[r]{See Alg. \ref{alg:and} for an update}
                   }{
                        \Return \Monitor($\varphi_1,\omega, t, l, G$) 
                        $\land$ \Monitor($\varphi_2,\omega, t, l, G$) 
                    }
                    
                    \Case{$ \varphi_1 U_I \varphi_2$ 
                    }{ 
                                     \textbf{Boolean} f := True;
             
             \For {$t'\in (t + I) \cap \mathbb{T}$}
             {
                  \If {Monitor$(\varphi_2, \omega, t',l,G)$}{
                      f := True;
                      
                      \For {$t'' \in [t,t']$}{
                            f := f $ \wedge $ Monitor$(\varphi_1, \omega, t'', l, G)$;
                            
                            \lIf{($\neg f$)}{\textbf{break}}
                      }
                      \lIf{($f$)}{\Return True}
                  }
             }
             \Return False;
                    }
 

                    \Case{$\ag^{\op} x \sim c$   \Comment*[r]{See Alg. \ref{alg:Aggregation}} }
                    {
                          \Return $\mathsf{AggregateB}(x, c, op, \ra, t, l, G)$;
                    }
                    
                    \Case{$\ct^{\op} \varphi \sim c$   \Comment*[r]{See Alg. \ref{alg:Counting}  and Alg. \ref{alg:CountingPara}}}
                    { 
                          \Return $\mathsf{CountingNeighboursB}(\varphi, c, op, \ra, t, l, G) $; 
                    }
           }
       }
    }
\caption{SaSTL Boolean monitoring algorithm MonitorB({$\varphi,\omega, t, l, G$})}
\label{alg:sastlBoolean}
\end{algorithm}

\begin{table*}[t]
	\caption{List of services running in simulated NYC}
	\centering
	\scriptsize
	\label{tab:ser}
	\begin{tabular}{|L{3.5cm}|m{13cm}|}
		\hline
Service & Description \\\hline 
\textbf{S1}: Traffic Service & It controls traffic signals in street intersections to relieve congestion and optimize or improve traffic performance. \\\hline
\textbf{S2}: Emergency Service& It requests green traffic signals in order to transport patients in critical conditions to hospitals as soon as possible.  \\\hline
\textbf{S3}: Accident Service & It blocks a street where some accident occurs and alert nearby vehicles to detour. \\\hline
\textbf{S4}: Infrastructure Service& It schedules infrastructure check-up and repair appointments. \\\hline
\textbf{S5}: Pedestrian Service  &It shortens the pedestrians' waiting time by adjusting traffic signals when pedestrians wait in the intersection.  \\\hline
\textbf{S6}:  Air Pollution Control& It adjusts the traffic by adjusting traffic signal and sending speed request to vehicles when CO emission is high.  \\\hline
\textbf{S7}: PM2.5/ PM10 Control & It adjusts the traffic when PM2.5/ PM10 emission is high by adjusting traffic signal and sending speed request to vehicles directly.  \\\hline
\textbf{S8}: Parking Service & It directs the driver to the nearest parking lot. \\\hline
\textbf{S9}: Noise Control & When noise level exceeds its threshold, it controls the number of vehicles going through related streets and redirect vehicles on the streets by adjusting traffic signals.  \\\hline
\textbf{S10}: Event Service & It ensures operation of a city event by blocking the lanes nearby the event.  \\\hline

	\end{tabular}
\end{table*}

\begin{algorithm}[t]
\tablefontsize
 \SetKwFunction{CS}{AggregateB}
  \SetKwProg{Fn}{Function}{:}{}
  \Fn{\CS{$x, c, op, \ra, \omega, t, l, G$}}{
  \Begin{
        \textbf{Real} v := 0; n := 0;
        
        \lIf{$op$ == "min"}{
             $v := \infty $
         }
        \lIf{$op$ == "max"}{
             $v := - \infty $
         }
     
     \newcontent{$L^l_{\ra} := \mathsf{deScan}(l, G, \ra)$}

     \For {$l' \in L^l_{\ra}$}{
          
           \If{$\mathsf{op} \in \{$min, max, sum$\}$}{
                       $v$ := $\mathsf{op}(v,\pi_x(\omega)[t, l'])$;
                  }
                  \If{$\mathsf{op} == $"avg"}{
                      $v$ := $\mathsf{sum}(v,\pi_x(\omega)[t, l'])$;
                  }
                  $n := n+1$
     }
     \lIf{$\mathsf{op}$ == "avg" $\land n \neq 0$}{
                $v :=v / n$
            }
    \eIf{$n==0$}{\Return $\mathsf{True}$}{\Return $v \sim c$;}
     
  }

  }
  
\caption{$\mathsf{AggregateB}(x, op, \ra, \omega, t, l, G)$}
\label{alg:Aggregation}
\end{algorithm}

\begin{algorithm}[t] 
\tablefontsize
 \SetKwFunction{CS}{CountingNeighboursB}
  \SetKwProg{Fn}{Function}{:}{}
      \Begin{
              \textbf{Real} $v := 0$; $n := 0$
              
              \lIf{$op$ == "min"}{
                    $v := \infty$
              }
              
              \lIf{$op$ == "max"}
              {
                  $v := - \infty$
              }
              \newcontent{$L^l_{\ra} := \mathsf{deScan}(l, G, \ra)$}
              
              \For {$l' \in L^l_{\ra}$}{
                  \If{Monitor$(\varphi,\omega, t, l, G)$ $\land$ $\mathsf{op} \in \{$min, max, sum$\}$}{
                      $v$ := $\mathsf{op}(v,1)$;
                  }
                  \If{Monitor$(\varphi,\omega, t, l, G)$ $\land$ $\mathsf{op} == $"avg"}{
                      $v$ := $\mathsf{sum}(v,1)$;
                  }
                  $n := n+1$
             }

            \lIf{$\mathsf{op}$ == "avg" $\land n \neq 0$}{
                $v :=v / n$
            }
        \eIf{$n==0$}{\Return $\mathsf{True}$}{\Return $v \sim c$;}

      }

\caption{$\mathsf{CountingNeighboursB}(x, op, \ra, \omega, t, l, G)$}
\label{alg:Counting}
\end{algorithm}

\begin{lemma}[Complexity of spatial operators, restate]

The time complexity to monitor at each location $l$ at time $t$ the satisfaction of a spatial operator such as  $\everywhere_{\ra}$, $\somewhere_{\ra}$, $\ag^\mathsf{op}$, and $\ct^\mathsf{op}$ is $O(log(n) + |L|)$ 
where L is the set of locations at distance within 
the range $\ra$ from $l$.
\label{lemma:spatial}
\end{lemma}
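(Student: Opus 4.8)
The plan is to split the per-query work into two phases — \emph{locating} the neighbourhood $L^l_{\ra}$ and \emph{aggregating} over it — and to charge the $\log n$ term entirely to the first phase and the $|L|$ term to a single linear scan in the second. I would start from the preprocessing already assumed in the text: for each of the $n$ locations we store, once and offline, a range tree keyed by the weighted distance $d(l,\cdot)$ to every other location. Building these trees is a one-time cost and is not charged to the per-location, per-time monitoring step that the lemma concerns; what remains is to bound the cost of answering a single spatial query at a fixed $(t,l)$.

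For the locating phase, $L^l_{\ra}$ consists of exactly those $l'$ with $d_1 \le d(l,l') \le d_2$ and $\mathcal{L}(l')\models\psi$. Since the tree for $l$ orders locations by $d(l,\cdot)$, retrieving the distance band $[d_1,d_2]$ is a one-dimensional range-reporting query, which a range tree answers in $O(\log n + k)$ time, where $k$ is the number of reported locations — the classical reporting bound for such structures~\cite{Lueker78}. Treating the fixed propositional test $\mathcal{L}(l')\models\psi$ as $O(1)$ per reported node, the locating phase costs $O(\log n + |L|)$.

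For the aggregating phase, I would argue that each of the four operators reduces to one linear pass over $L^l_{\ra}$. For $\ag^{\op}$ we read $\pi_x(\omega)[t,l']$ at each $l'\in L^l_{\ra}$ and update a running $\max$, $\min$, $\mathrm{sum}$, or $\avg$; for $\ct^{\op}$ we read the already-computed satisfaction (or robustness) value of the inner sub-formula at each $l'$ — supplied by the recursion, hence not recharged here — and fold it into the same running aggregate. Either way this is $O(|L|)$. The derived operators need no separate treatment, since $\everywhere_{\ra}\varphi \equiv \ct^{\min}\varphi > 0$ and $\somewhere_{\ra}\varphi \equiv \ct^{\max}\varphi > 0$, so their cost equals that of $\ct^{\op}$. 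Summing the two phases yields $O(\log n + |L|)$.

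The main subtlety I anticipate is the $\psi$-filter. The reporting bound $O(\log n + k)$ counts points \emph{reported}, i.e. the size of the distance band, which may strictly exceed $|L|=|L^l_{\ra}|$ when many band-members fail $\psi$. To land exactly on $O(\log n + |L|)$ I would fold $\psi$ into the index — for instance maintaining one distance-keyed range tree per proposition (or per $\psi$-class of labels) so that only $\psi$-satisfying locations are ever reported — which keeps the reported count equal to $|L|$ while leaving the $O(\log n)$ query term unchanged. Alternatively, if the index is not stratified by label, I would phrase the bound in terms of the band size and note that under the paper's convention $|L|$ already denotes that set. Everything else is routine, so this bookkeeping about what ``$L$'' counts is the only point that genuinely needs care.
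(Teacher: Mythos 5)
Your proposal is correct and follows essentially the same route as the paper's own proof, which simply cites the range-tree reporting bound of~\cite{Lueker78} for retrieving the neighbourhood in $O(\log(n) + |L|)$ and notes that the aggregation/counting of Algorithms~\ref{alg:Aggregation} and~\ref{alg:Counting} can be folded into that retrieval pass. Your closing observation about the $\psi$-filter (that the raw distance-band report may exceed $|L^l_{\ra}|$ unless the index is stratified by label) is a genuine subtlety that the paper's one-line proof silently glosses over, and your proposed fix is sound.
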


\begin{proof}
According to~\cite{Lueker78}, the time complexity to 
retrieve a set of nodes $L$ with a distance 
to a desired location in a range $\ra$  from a location $l$ is $O(log(n) + |L|)$.
The aggregation and counting operations of \algref{alg:Aggregation} and \algref{alg:Counting} can be performed
while the locations are retrieved.
\end{proof}

\begin{theorem}
The time complexity of the SaSTL monitoring algorithm  is upper-bounded by $O(|\phi|\times T_{max}\times (log(n) + |L|_{max}))$
where $T_{max}$ is the largest number of samples of  the 
intervals considered in the temporal operators of $\phi$ 
and $|L|_{max}$ is the maximum number of locations defined 
by the spatial temporal operators of $\phi$. 
\label{th:timeAlg}
\end{theorem}

\begin{proof}
Following Lemma \ref{lemma:spatial}, by considering 
$T_{max}$ the worst possible number of samples 
that we need to consider for all possible intervals of temporal operators present
in the formula, and $|L|_{max}$ for the worst possible number of locations that we need to consider for all possible intervals of spatial operators present
in the formula. 
When there are two or more operators nested, the time complexity for one operation is bounded by $O(T_{max}~(log(n) + |L|_{max}))$. 
As there are $|\phi|$ nodes in the syntax tree of $\phi$, the time complexity of the SaSTL monitoring algorithm is bounded by the summation over all $|\phi|$ nodes, which is $O(|\phi|~T_{max}~(log(n) + |L|_{max}))$.
\end{proof}

\noindent
\textbf{3. Monitoring Algorithms}
We presented the details of the Boolean monitoring algorithms in \algref{alg:sastlBoolean} with \algref{alg:Aggregation} for the aggregation operation and \algref{alg:Counting} for the counting operation. 

\noindent
\textbf{4. Smart Services in Simulated NYC} In the evaluation section, we set up the simulator with ten smart services. The description of these services are presented in \tabref{tab:ser}.

\begin{figure}[t]
    \centering
    \includegraphics[width = 0.13\textwidth]{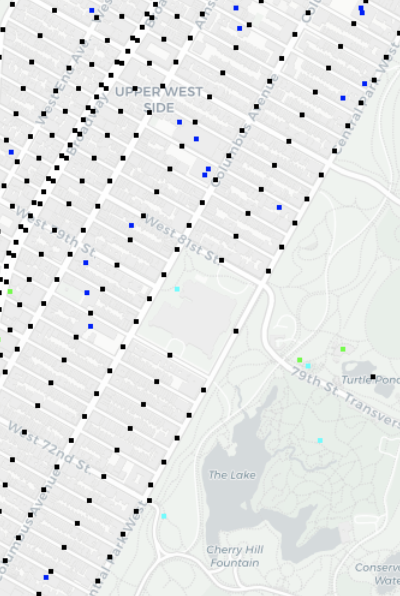} \hspace{0.1cm}
\includegraphics[width = 0.13\textwidth]{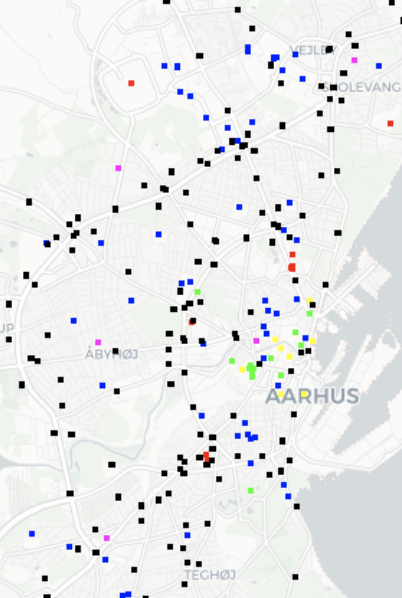} \hspace{0.1cm}
\includegraphics[width = 0.13\textwidth]{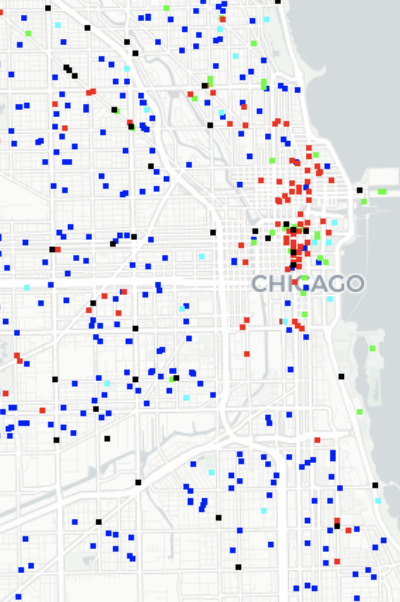}

\scriptsize{
(1) New York   \hspace{1cm}   (2) Aarhus  \hspace{1cm}       (3) Chicago }
    \caption{\newcontent{Partial Maps of Chicago, Aarhus and New York with PoIs and sensors annotated. (The black nodes represent the locations of sensors, red nodes represent the locations of hospitals, dark blue nodes represent schools, light blue nodes represent parks and green nodes represent theaters.)}}
    \label{fig:maps}
\end{figure}

\end{document}